\providecommand{\algorithmname}{Algorithm}
\theoremstyle{plain}
\newtheorem*{thm*}{\protect\theoremname}
\theoremstyle{plain}
\newtheorem*{cor*}{\protect\corollaryname}
\theoremstyle{definition}
\newtheorem{defn}{\protect\definitionname}
\theoremstyle{plain}
\newtheorem{thm}{\protect\theoremname}
\theoremstyle{plain}
\newtheorem{cor}{\protect\corollaryname}
\theoremstyle{plain}
\newtheorem{lem}{\protect\lemmaname}
\theoremstyle{remark}
\newtheorem{rem}{\protect\remarkname}
\theoremstyle{remark}
\newtheorem*{acknowledgement*}{\protect\acknowledgementname}
\providecommand{\E}{\mathrm{E}}
\definecolor{gray-comment}{gray}{0.5}
\theoremstyle{plain}
\newtheorem*{rep@theorem}{\rep@title}
\newcommand{\newreptheorem}[2]{%
\newenvironment{rep#1}[1]{%
 \def\rep@title{#2 \ref{##1}}%
 \begin{rep@theorem}}%
 {\end{rep@theorem}}}
\providecommand{\acknowledgementname}{Acknowledgement}
\providecommand{\corollaryname}{Corollary}
\providecommand{\definitionname}{Definition}
\providecommand{\lemmaname}{Lemma}
\providecommand{\remarkname}{Remark}
\providecommand{\theoremname}{Theorem}
\begin{document}
\title{Beyond matroids: secretary problem and prophet inequality with general
constraints.}
\author{Aviad Rubinstein\thanks{UC Berkeley.
Most of the research was done while the author was an intern at Microsoft Research New England. Part of this research was also supported by Microsoft Research PhD Fellowship, as well as NSF grant CCF1408635 and by Templeton Foundation grant 3966. This work was done in part at the Simons Institute for the Theory of Computing.}}
\maketitle
\begin{abstract}
We study generalizations of the ``Prophet Inequality'' and ``Secretary
Problem'', where the algorithm is restricted to an arbitrary downward-closed
set system. For $\left\{ 0,1\right\} $ values, we give $O\left(\log n\right)$-competitive
algorithms for both problems. This is close to the $\Omega\left(\log n/\log\log n\right)$
lower bound due to Babaioff, Immorlica, and Kleinberg. For general
values, our results translate to $O\left(\log n\cdot\log r\right)$-competitive
algorithms, where $r$ is the cardinality of the largest feasible
set. This resolves (up to the $O\left(\log r\cdot\log\log n\right)$
factors) an open question posed to us by Bobby Kleinberg \cite{bobby-question}.
\end{abstract}

\section{Introduction}

The ``Secretary Problem'' and ``Prophet Inequality'' and their variants
have been central problems in optimal stopping theory for decades.
In both problems, an online decision maker selects one of $n$ items
that arrive online. The decision maker observes only one item at a
time, and must decide immediately and irrevocably whether to select
this item. The expected value of the item selected by the decision
maker is compared to the offline optimum. There is little that an
online algorithm can do in an arbitrary worst case, so the input is
restricted: In the Secretary Problem, the values of items are chosen
adversarially, but their arrival order is random. In Prophet Inequality,
the order is adversarial, but the values are drawn from known, independent
but not identical distributions. Algorithms with optimal competitive
ratios (with respect to the offline optimum) of $e$ and $2$ were
known since the 1960's (e.g. \cite{Dynkin63}) and 1970's \cite{KS77-prophet},
respectively.

In the past decade, variants of both problems received significant
attention from theoretical computer scientists thanks to their rich
algorithmic structure and applications to online and offline mechanism
design (e.g. \cite{BIKK08-secretary_and_AGT-survey,HKS07-prophet_and_online-MD,KW12-matroid_prophet}).
In particular, there have been many works on settings where the decision
maker is allowed to select any subset of the items subject to a certain
family of combinatorial constraints. In the famous ``Matroid Secretary
Problem'' \cite{BIK07-secretary_original}, for example, the decision
maker is allowed to select any independent set in a given matroid.
Obtaining a constant competitive ratio algorithm is a long-standing
open problem: the state of the art is $O\left(\log\log r\right)$,
where $r$ is the rank of the matroid \cite{Lachish14-secretary,FSZ15-loglogr}.
Its cousin, the ``Matroid Prophet Inequality'', has been resolved
by Kleinberg and Weinberg who gave a $2$-competitive algorithm \cite{KW12-matroid_prophet}.
Other constraints such as special classes of matroids (\cite{Dinitz13-survey}
and references therein), polymatroids \cite{DK15-polymatroid}, knapsack
and matchings \cite{FSZ16-OCRS}, etc. have also been studied in these
settings. Even the 2013 survey by Dinitz \cite{Dinitz13-survey} seems
outdated with so much exciting progress in the last couple of years
(e.g. \cite{AKW14-prophet_limited_info,FZ15-submodular_secretary,KKN15-limited_randomness,FGKN15-temp_secretary,Vardi15-returning_secretary}).

\subsubsection*{Our results}

In this work we revisit an important missing piece in this puzzle:
what happens when there are no guarantees on the family of feasible
sets? Babaioff, Immorlica, and Kleinberg \cite{BIK07-secretary_original}
gave an $\Omega\left(\log n/\log\log n\right)$ lower bound\footnote{Stated in terms of the $r$, the maximum size of a feasible set, \cite{BIK07-secretary_original}'s
construction gives an $\Omega\left(r\right)$-lower bound on the competitive
ratio; one can easily obtain a matching $O\left(r\right)$-upper bound
by applying the algorithms for the classical (single item) variants
of both problems.} on the competitive ratio of the Secretary Problem with arbitrary
downward-closed constraints, even in the special case where all values
are in $\left\{ 0,1\right\} $. (Essentially the same construction
gives the same lower bound for the corresponding Prophet Inequality;
see also Appendix \ref{sec:lower_bound}.) Beyond this lower bound,
the problem remained poorly understood. In particular there was a
disturbing lack of upper bounds on the competitive ratio - nothing
beyond the trivial $O\left(n\right)$ was known. Here, we make significant
progress towards closing this gap:
\begin{thm*}
[Main Theorem] When the items take values in $\left\{ 0,1\right\} $,
the competitive ratios for {\sc Downward-Closed Prophet} and {\sc Downward-Closed Secretary}
are $O\left(\log n\right)$.
\end{thm*}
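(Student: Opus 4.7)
My plan has two phases: a scale reduction that decomposes $\mathrm{OPT}$ along a geometric scale at a cost of $O(\log n)$ in the competitive ratio, followed by an online sub-algorithm for the reduced sub-instance that exploits the $\{0,1\}$-valued structure together with the fact that in a downward-closed family every item appearing in some feasible set is itself a feasible singleton.

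For the prophet version, bucket the items by the dyadic scale of $p_i := \Pr[v_i = 1]$: set $B_k := \{i : p_i \in [2^{-k-1}, 2^{-k})\}$ for $k = 0, 1, \dots, O(\log n)$ (items with $p_i < n^{-2}$ contribute $o(1)$ in expectation and can be discarded). Downward-closedness implies that $S \cap B_k \in \mathcal{F}$ for every $S \in \mathcal{F}$, so
\[
  \mathbb{E}[\mathrm{OPT}] \;\le\; \sum_k \mathbb{E}[\mathrm{OPT}_k],
\]
where $\mathrm{OPT}_k$ is the per-bucket offline optimum. At least one bucket $B_{k^\star}$ therefore carries $\mathbb{E}[\mathrm{OPT}]/O(\log n)$, and the prophet algorithm, which knows the distributions, identifies $k^\star$ in advance and commits to it, rejecting all items outside $B_{k^\star}$. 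For the secretary variant the same partition works, but $k^\star$ is estimated from a short initial random sample, paying the same $O(\log n)$ factor.

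Within the retained bucket $B$ all probabilities lie in $[p, 2p]$ for some common $p$, and the per-bucket target is $r := \mathbb{E}[\mathrm{OPT} \cap B]$. The key structural observation is that in a downward-closed family every item in $B$ is feasible as a singleton, so the algorithm may always fall back on accepting a single $1$-valued item. If $r = O(1)$ this fallback already suffices: a Poisson-style computation on $|V \cap B|$, using the near-uniformity of $p$, shows that the probability of seeing any $1$-valued item in $B$ is $\Omega(r)$, giving expected value $\Omega(r)$. If $r = \omega(1)$ the algorithm instead runs an online contention-resolution-style scheme, provisionally accepting each $1$-valued item with a probability $q(r)$ calibrated so that $\Omega(r)$ items survive both the coin flip and the feasibility check.

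The main obstacle is the analysis of the $r = \omega(1)$ regime for arbitrary downward-closed $\mathcal{F}$. For matroid constraints, standard OCRS machinery yields a constant-competitive ratio essentially for free; for general downward-closed families this can fail dramatically, as witnessed by the $\Omega(\log n/\log\log n)$ lower bound of \cite{BIK07-secretary_original}, which is built on precisely such ``bad'' blocking structures. The saving grace in our setting is the interplay between the near-uniformity of $p$ inside $B$ (which prevents any single item from dominating contention) and the $\{0,1\}$-valued assumption (which concentrates $|V \cap B|$ around its mean and restricts the adversary's options). A union bound over the minimal infeasible extensions of $\mathcal{F}|_B$, refined by this concentration, is expected to give a per-step blocking probability bounded by a constant and to deliver the desired $\Omega(r)$ acceptance in expectation. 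For the secretary version the same scheme applies, with concentration under the random arrival order playing the role of the concentration of $V$ in the prophet analysis.
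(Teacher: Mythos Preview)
Your bucketing reduction is sound, but it buys nothing on the hard instances: the $\Omega(\log n/\log\log n)$ lower-bound construction of \cite{BIK07-secretary_original} already has every item with the \emph{same} probability $p=\log\log n/\log n$, so the entire instance lands in a single bucket with $r\approx\log n/\log\log n=\omega(1)$. All the difficulty of the problem survives your reduction intact, and the $r=\omega(1)$ case is where the proof has to happen --- but there you have only a hope, not an argument. You propose accepting each $1$-valued item with some probability $q(r)$ and assert that a union bound over minimal infeasible extensions will keep the per-step blocking probability bounded by a constant. On the lower-bound instance this is false: the maximal feasible sets form a partition of $[n]$ into blocks of size $\log n/\log\log n$, so the moment you accept \emph{any} item, every item outside its block becomes infeasible and the blocking probability jumps to essentially $1$. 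After that first acceptance the expected number of further $1$'s you can ever collect is below $1$, regardless of how $q(r)$ is tuned; near-uniformity of $p$ does not help, since this instance is already perfectly uniform. No scheme that decides item-by-item based only on feasibility of the current item and a coin can beat $O(1)$ here.

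The paper's algorithms look nothing like an OCRS precisely for this reason. For the prophet problem it maintains a \emph{dynamic} target $\tau$ together with the probability $\pi$ that a prophet committed to the algorithm's past selections can still reach $\tau$; an item is accepted only if doing so drops $\pi$ by at most a factor $n^{2}$, and whenever no such ``good'' item is likely to appear, $\tau$ is decremented --- which provably at least doubles $\pi$ (this is the key lemma). Balancing $2\log n$ decrements against each acceptance yields the $O(\log n)$ ratio. For the secretary problem it runs a sliding-window greedy over the last $n/2$ arrivals and controls the imbalance between ``good'' and ``bad'' update events via a martingale Bennett inequality plus a union bound over all $n^{O(\tau)}$ possible trajectories with $\tau=\Theta(\mathrm{OPT}/\log n)$. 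Both arguments reason about which feasible sets can still yield a high-value solution given the history, which is exactly the information your per-item acceptance rule discards.
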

For general (non-negative) valuations, we can reduce to the $\left\{ 0,1\right\} $-valued
case with a loss of $O\left(\log r\right)$, where $r$ denotes the
cardinality of the largest feasible set.
\begin{cor*}
For general item values, the competitive ratios for {\sc Downward-closed Prophet}
and {\sc Downward-Closed Secretary} are $O\left(\log n\cdot\log r\right)$.
\end{cor*}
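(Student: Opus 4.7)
The plan is to reduce general valuations to the $\{0,1\}$-valued case by partitioning item values into $O(\log r)$ geometric buckets, randomly selecting one, and invoking the Main Theorem on the resulting $\{0,1\}$-instance. Fix an instance with offline optimum $\mathrm{OPT}$; assume without loss of generality that every singleton is feasible (items with infeasible singletons can never be chosen and are dropped), so that $M := \max_j v_j$ satisfies $M \le \mathrm{OPT} \le rM$. Define buckets $B_i = \{j : v_j \in (M/2^{i+1},\, M/2^i]\}$ for $i = 0, 1, \ldots, L$ with $L = \lceil \log(2r) \rceil = O(\log r)$, and write $\mathrm{OPT}_i$ for the contribution of $B_i \cap \mathrm{OPT}$ to the optimum. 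Items of value below $M/(2r)$ together contribute at most $r \cdot M/(2r) = M/2 \le \mathrm{OPT}/2$, so $\sum_{i=0}^L \mathrm{OPT}_i \ge \mathrm{OPT}/2$, and by pigeonhole some $i^*$ satisfies $\mathrm{OPT}_{i^*} = \Omega(\mathrm{OPT}/\log r)$.

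The algorithm samples $i$ uniformly from $\{0, \ldots, L\}$, builds the $\{0,1\}$-valued subinstance in which items of $B_i$ are marked $1$ and all others are marked $0$, and runs the $O(\log n)$-competitive algorithm of the Main Theorem on it. With probability $1/(L+1)$ the guess hits $i^*$; conditionally, the Main Theorem selects in expectation $\Omega(|B_{i^*} \cap \mathrm{OPT}|/\log n)$ items from $B_{i^*}$, each of true value at least $M/2^{i^*+1}$, for an expected total value of $\Omega(\mathrm{OPT}_{i^*}/\log n) = \Omega(\mathrm{OPT}/(\log n \cdot \log r))$, as required.

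What remains is to realize this bucketing online, i.e., to fix a reference value for $M$. In the Prophet setting this is essentially free: the distributions are known, so the bucket boundaries can be chosen offline and the algorithm randomizes only over $O(\log r)$ static thresholds. In the Secretary setting one dismisses a constant fraction of arrivals to compute an estimate $\widetilde M$ and then runs the bucket-and-$\{0,1\}$-algorithm on the remaining suffix; the residual arrival order is still uniform and the downward-closed family is unchanged, so the Main Theorem still applies as a black box on the undismissed items.

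The main technical obstacle is the secretary step: a constant-factor estimate of $M$ is not automatic from a dismissed prefix (the second-largest value can be arbitrarily smaller than $M$), and any additional randomization used to fix the scale must contribute only constants rather than another $\log$, since our budget of $\log n \cdot \log r$ is already spent on the Main Theorem and the bucket guess. Resolving this cleanly — likely by folding a scale guess into the rank-based structure of the Main Theorem, or via a more refined dismissal argument — is the part that needs care; the prophet reduction is otherwise a clean, black-box application of pigeonhole and the Main Theorem.
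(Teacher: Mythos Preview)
Your bucketing-plus-pigeonhole reduction is the right skeleton, and for the secretary problem it matches the paper's approach once the scale $M$ is known. You correctly flag the scale-learning step as the obstacle and leave it open; the paper's resolution is short: run the classical single-item secretary algorithm on the first $n/2$ arrivals. With constant probability it selects the global maximum, contributing $\Omega(M)$ directly (this is also what justifies ignoring values below $M/r$); with another constant probability the global maximum lands in the observation phase and is dismissed, in which case its value is known exactly and your bucketing argument applies verbatim to the remaining $n/2$ items. No additional randomness or $\log$ factor is spent on the scale.

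For the prophet problem, however, your claim that the reduction is ``essentially free'' hides a genuine gap. The values $X_i$ are random, so $M = \max_j X_j$ is itself a random variable and cannot serve as a fixed reference for offline bucket boundaries; your inequality $M \le OPT \le rM$ does not even type-check in this setting. If instead you anchor the buckets at $OPT$ (which \emph{is} computable from the distributions), the top bucket ends at some fixed multiple of $OPT$, but the realized maximum can exceed this with positive probability, and such tail realizations can carry a constant fraction of $OPT$ --- consider a single item that equals $n$ with probability $1/n$ and is $0$ otherwise: none of your buckets ever sees it. The paper handles this by running the better of two algorithms: a \emph{tail} algorithm that simply takes the first item of value at least $2\,OPT$ (a short calculation gives $\sum_i \Pr[X_i \ge 2\,OPT] \le \ln 2$, so with probability at least $1-\ln 2$ the first such item is still feasible), and a \emph{core} algorithm that applies your bucketing only to the range $[OPT/2r,\,2\,OPT]$, where $O(\log r)$ geometric intervals now genuinely cover everything that matters. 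Without the tail component, the prophet reduction does not go through.
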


\paragraph*{A recent observation by Kesselheim, Singla, and Svensson}

It was already noted in the first version of this paper that the Secretary
Problem algorithm requires very minimal assumptions on the order of
arrival of secretaries (see Remark \ref{rem:order-oblivious}). It
was recently brought to my attention\footnote{July 2024 private communication with Sahil Singla, based on a Fall
2022 discussion between Thomas Kesselheim, Sahil Singla, and Ola Svensson.} that this implies that the secretary algorithm satisfies the {\em order-oblivious}
criterion of \cite{AKW14-prophet_limited_info}. Therefore, by a black-box
reduction of \cite{AKW14-prophet_limited_info} it immediately implies
our results for {\sc Downward-Closed Prophet}, with a completely
different algorithm and analysis. Furthermore, it implies that the
results for {\sc Downward-Closed Prophet} can be obtained with a
{\em single-sample} algorithm, i.e.~an algorithm that, for each
distribution, takes as input only a single sample instead of a full
description of the distribution! 

\paragraph{Computational Efficiency}

Our algorithm for {\sc Downward-Closed Secretary} assumes access
to a demand-like oracle. The algorithm for {\sc Downward-Closed Prophet}
makes slightly more complicated queries of the form: ``conditioning
on the history of realizations and items selected by the algorithm
so far, what is the probability that the prophet can still obtain
value $\tau$?''. Our algorithms are efficient assuming access to
such oracles. Note that it is unreasonable to expect much more: optimizing
over arbitrary families of downward-closed sets is computationally
intractable even in the offline case (for example, when the feasible
sets are the independent sets of a graph \cite{bobby-question}).

\paragraph{Non-monotone feasibility constraints}

One can further generalize the problem and consider feasibility constraints
which are not even downward-closed. In Section \ref{sec:Non-monotone-feasibility-constra}
we briefly discuss {\sc Non-monotone Prophet} and {\sc Non-monotone Secretary}
and prove near tight lower bounds on the competitive ratios. 

\subsection{Techniques}

Both algorithms rely on potential function argument. The key idea
for the analysis of {\sc Downward-closed Prophet} is the use of a
{\em dynamic potential function}: when the algorithm cannot guarantee
adequate progress with respect to the current potential function,
it dynamically modifies the potential function. We believe that this
simple idea will find applications in other potential function arguments.

We note that our algorithms for {\sc Downward-closed Prophet} and
{\sc Downward-Closed Secretary} bear little resemblance to each other
or to related works on the Matroid Secretary Problem and Matroid Prophet
Inequality. In particular, we focus on the case of $\left\{ 0,1\right\} $-valued
items, which is easy for matroids. Defying our intuition from matroids
again, our algorithm and analysis for {\sc Downward-closed Secretary}
happens to be simpler than that of {\sc Downward-Closed Prophet}.
Nevertheless, we begin with {\sc Downward-Closed Prophet} in the
next section, as we find the techniques more exciting. The algorithm
for {\sc Downward-closed Secretary} is described in Section \ref{sec:Secretary}
and non-monotone constraints are discussed in Section \ref{sec:Non-monotone-feasibility-constra}.

\section{\label{sec:Prophet}Prophet}
\begin{defn}
[{\sc Downward-Closed Prophet}] Consider $n$ items with independently
distributed values $\left\{ X_{i}\sim{\cal D}_{i}\right\} _{i=1}^{n}$.
Let ${\cal F}$ be an arbitrary downward-closed set system over the
items $\left[n\right]$, and initialize $W$ as the empty set. The
algorithm receives as input $n$, ${\cal F}$, and the distributions,
and observes the realizations (the value of $X_{i}$) online. After
observing the realization of $X_{i}$, the algorithm must decide (immediately
and irrevocably) whether to add $i$ to the set $W$, subject to the
constraint that $W$ remains a feasible set in ${\cal F}$. The objective
is to maximize the sum of values of items in $W$. 
\end{defn}
\begin{thm}
\label{thm:prophet}When the $X_{i}$'s take values in $\left\{ 0,1\right\} $,
there is a deterministic algorithm for {\sc Downward-Closed Prophet}
that achieves a competitive ratio of $O\left(\log n\right)$.
\end{thm}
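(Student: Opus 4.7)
I will design a deterministic online algorithm that, for each item, consults the oracle
$P_t(\tau):=\Pr[\exists\,S\subseteq[t{+}1,n]\text{ with } W_t\cup S\in\mathcal{F}\text{ and }\sum_{i\in S}X_i\ge\tau \mid \mathcal{H}_t]$,
where $W_t$ is the current selection and $\mathcal{H}_t$ is the history of past realizations. Because item values lie in $\{0,1\}$, a zero-valued item is never worth taking, so the only real choices are at items with $X_i=1$ whose addition keeps $W_t\cup\{i\}$ feasible. The main tool in the analysis is a \emph{dynamic} potential $\Phi_t=-\log_2 P_t(\tau)$ in which the target $\tau$ is itself a piece of algorithm state that is lowered over time.

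By the standard guess-and-double reduction (losing only a constant factor) I assume the algorithm knows a scale $T$ with $\text{OPT}/2\le T\le\text{OPT}$, so by Markov $P_0(T/4)\ge 3/4$. Initialize $\tau\leftarrow T/4$. When item $i$ arrives with $X_i=1$ and $W_t\cup\{i\}\in\mathcal{F}$, query the oracle in two hypothetical futures: the ``take'' branch with new state $(W_t\cup\{i\},\mathcal{H}_t\cup\{X_i=1\})$ and target $\tau-1$, and the ``skip'' branch with state $(W_t,\mathcal{H}_t\cup\{X_i=1\})$ and target $\tau$. Take $i$ iff the take-probability is at least a fixed constant fraction (say $1/2$) of the skip-probability. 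If at any moment $P_t(\tau)$ drops below $1/(2n)$, rescale $\tau\leftarrow\lfloor\tau/2\rfloor$ and reset the potential; this rescaling event can occur at most $O(\log n)$ times since $\tau$ is a positive integer bounded by $T\le n$.

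The analysis relies on three ingredients. First, for a \emph{fixed} $\tau$ and fixed algorithmic rule, the process $P_t(\tau)$ is a nonnegative martingale with respect to $\mathcal{H}_t$, so skipping an item preserves expected probability. Second, by the taking rule, every accepted item costs at most $\log_2 2 = 1$ unit of $\Phi$ (with $\tau$ decremented by one) while contributing one unit of collected value. Third, a rescaling event witnesses $P_t(\tau)\le 1/(2n)$ under the current conditioning; this implies, via Markov, that the conditional contribution of the prophet from the remainder of the stream is at most $O(\tau/n)\le O(1)$, so rescaling events can be charged to the prophet only a total of $O(\log n)$ times.

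Combining these, during any interval in which $\tau$ is not rescaled either the algorithm harvests $\Omega(\tau)$ items (comparable to the current target) or the martingale $P_t(\tau)$ is driven below $1/(2n)$, triggering a rescaling. Summing collected value across at most $O(\log T)=O(\log n)$ phases against the $O(\log n)$ rescaling events gives expected collected value $\Omega(T/\log n)=\Omega(\text{OPT}/\log n)$. I expect the main obstacle to be the \emph{hand-off} at each rescaling: when the potential function literally changes (new denominator $\tau/2$), one must argue that the probability of still achieving the smaller target is $\Omega(1)$ despite the previous failure, so that the new phase starts with useful slack. This is precisely the step where a static potential fails and why the dynamic-target framing is needed; a concentration/Markov argument at the rescaling boundary should close the gap.
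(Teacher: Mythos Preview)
Your high-level framing---a potential $\Phi_t=-\log P_t(\tau)$ with a \emph{dynamic} target $\tau$---matches the paper's. But the proposal has two substantive gaps, and the second is exactly the heart of the argument.

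\textbf{The initialization is not a Markov step.} You claim $P_0(T/4)\ge 3/4$ ``by Markov''. Markov's inequality bounds upper tails; to lower-bound $\Pr[V\ge T/4]$ from $\E[V]\ge T$ you would need an upper bound on $V$, and the only generic one is $V\le n$, giving merely $\Pr[V\ge T/4]\ge \Theta(T/n)$. The paper needs a genuine concentration inequality for suprema of sums over an arbitrary downward-closed family (it invokes Ledoux's Talagrand-type bound) to get $\Pr[V>\mathrm{OPT}/2]\ge \Omega(1)$. This is not a cosmetic issue: without concentration you start the potential at $\Phi_0=\Theta(\log n)$, which already eats your entire budget.

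\textbf{The rescaling hand-off is the whole proof, and your sketch does not supply it.} You write that when $P_t(\tau)\le 1/(2n)$ you halve $\tau$ and ``a concentration/Markov argument at the rescaling boundary should close the gap''. It does not. Knowing $P_t(\tau)$ is tiny gives no lower bound on $P_t(\tau/2)$: if the remaining feasible extensions are, say, a single set of size $10$ whose items are $1$ with probability $p\ll 1$, then $P_t(10)\approx p^{10}$ and $P_t(5)\approx p^5$ are both minuscule. (Relatedly, your claim that $P_t(\tau)\le 1/(2n)$ implies the prophet's remaining conditional value is $O(\tau/n)$ is false; you only get $\le \tau + n\cdot\tfrac{1}{2n}=\tau+\tfrac12$.) The paper's mechanism is quite different: it never halves $\tau$. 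Instead it classifies each future feasible item $j$ as \emph{good} if committing to $j$ drops $\pi$ by at most a factor $n^2$, and lets $A$ be the event ``some good item has value $1$''. The key lemma is that whenever $\Pr[A]\le 1/3$, decrementing $\tau$ by one at least \emph{doubles} $\pi$. The proof uses that (i) bad items together account for at most a $1/n$ fraction of $\pi$, so almost all of $\pi$ lives on the event $A$; and (ii) conditioning on $A$ (via the first good $j$ with $X_j=1$) can raise the optimum by at most $1$, hence $\pi(\tau-1)\ge \pi_A(\tau)\ge (3-o(1))\pi(\tau)$. This is precisely the ``hand-off'' step you flagged as the obstacle, and it requires the good/bad dichotomy that your taking rule (take iff take-prob $\ge \tfrac12$ skip-prob) does not provide. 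With this lemma, the bookkeeping is clean: each selection costs $2\log n$ in $\log\pi$, each decrement of $\tau$ recovers $1$, and $\log\pi\le 0$ forces $|W|\ge \Omega(\mathrm{OPT}/\log n)$.

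One more subtlety your sketch elides: your martingale observation (``skipping preserves expected probability'') controls $\E[P_t(\tau)]$ but not the realized trajectory. The paper does not argue via the martingale; instead, when $\Pr[A]\ge 1/3$ it simply waits for a good $1$ (and if none arrives it selects the last good item anyway, losing only a constant factor in expectation). Your proposal has no analogue of this ``wait for a good item'' step, so even within a phase there is no argument that the algorithm actually collects value before the potential collapses.
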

Let $r$ denote the maximum cardinality of a feasible set $S\in{\cal F}$. 
\begin{cor}
\label{cor:prophet}There is a deterministic algorithm for {\sc Downward-Closed Prophet}
that achieves a competitive ratio of $O\left(\log n\cdot\log r\right)$.
\end{cor}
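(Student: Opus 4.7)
My plan is to deterministically reduce general non-negative valuations to the $\{0,1\}$ case of Theorem~\ref{thm:prophet} by partitioning $\E[\mathrm{OPT}]$ along a geometric ladder of value thresholds, and to handle the extreme upper tail separately via the classical (single-item) Krengel--Sucheston prophet inequality. All branching decisions are made from the prior distributions alone, so the algorithm is deterministic. Put $V:=\E[\mathrm{OPT}]$ and assume WLOG that $\{i\}\in\mathcal{F}$ for every item (otherwise $i$ is irrelevant, so $\E[\max_i X_i]\le V$). Decompose
\[
V \;=\; V_{\mathrm{low}}+V_{\mathrm{mid}}+V_{\mathrm{high}},
\]
where these are $\mathrm{OPT}$'s expected contributions from realized values in $[0,V/(4r))$, $[V/(4r),4V)$, and $[4V,\infty)$. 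Since $|\mathrm{OPT}^{*}|\le r$, we have $V_{\mathrm{low}}\le V/4$, so $V_{\mathrm{mid}}+V_{\mathrm{high}}\ge 3V/4$; the algorithm computes $V_{\mathrm{mid}}$ and $V_{\mathrm{high}}$ from the input and branches to Case A if $V_{\mathrm{mid}}\ge V_{\mathrm{high}}$, otherwise to Case B. In either branch the chosen quantity is $\ge 3V/8$.

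\textbf{Case A.} Split $[V/(4r),4V)$ into the $O(\log r)$ dyadic buckets $[2^{k},2^{k+1})$ and, by pigeonhole over the input distributions, select the level $k^{*}$ whose expected $\mathrm{OPT}$ contribution is $\Omega(V/\log r)$. Form the auxiliary $\{0,1\}$ instance $Y_i:=\mathbf{1}[X_i\ge 2^{k^{*}}]$, run the algorithm of Theorem~\ref{thm:prophet} on $(Y_i)_i$, and select item $i$ in the original problem exactly when the $\{0,1\}$ algorithm selects $i$ and $Y_i=1$. The $\{0,1\}$-OPT at threshold $2^{k^{*}}$ is at least the expected number of bucket-$k^{*}$ items inside $\mathrm{OPT}^{*}$, which is $\Omega(V/(2^{k^{*}}\log r))$; Theorem~\ref{thm:prophet} thus yields expected $\{0,1\}$ value $\Omega(V/(2^{k^{*}}\log n\cdot\log r))$, and multiplying by the per-item guarantee $2^{k^{*}}$ gives expected value $\Omega(V/(\log n\cdot\log r))$ in the original instance.

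\textbf{Case B.} Run the classical single-item prophet inequality (online, deterministic, single threshold; feasible because singletons lie in $\mathcal{F}$), which picks one item with expected value $\ge\E[\max_i X_i]/2$. The crux is the upper-tail bound
\[
V_{\mathrm{high}} \;\le\; 2\,\E[\max_i X_i],
\]
so the algorithm collects $\ge V_{\mathrm{high}}/4=\Omega(V)$ in expectation. To prove the bound, let $N(t):=\sum_i\Pr[X_i\ge t]$. From $\E[\max_i X_i]\le V$ and $\Pr[\max\ge 4V]\ge 1-e^{-N(4V)}$ we get $N(4V)\le\ln(4/3)<1$, and monotonicity of $N$ gives $N(t)<1$ on $[4V,\infty)$, so $1-e^{-N(t)}\ge N(t)/2$ throughout. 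Layer-cake integration now yields
\[
V_{\mathrm{high}}\;\le\;\sum_i\E\bigl[X_i\,\mathbf{1}[X_i\ge 4V]\bigr]\;=\;4V\,N(4V)+\int_{4V}^{\infty}\!N(t)\,dt,
\]
and
\[
\E[\max_i X_i]\;\ge\; 4V\,\Pr[\max\ge 4V]+\int_{4V}^{\infty}\!\Pr[\max\ge t]\,dt\;\ge\; \tfrac12\Bigl(4V\,N(4V)+\int_{4V}^{\infty}\!N(t)\,dt\Bigr),
\]
from which $V_{\mathrm{high}}\le 2\,\E[\max_i X_i]$.

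The main obstacle is precisely this upper-tail bound: the naive estimate $V_{\mathrm{high}}\le r\,\E[\max_i X_i]$ loses a factor of $r$, and simply extending the dyadic bucketing into $[4V,\infty)$ fails because the number of significant buckets there is not a-priori bounded by $\log r$. The key point is that $\E[\max_i X_i]\le V$ forces $N(\cdot)<1$ on $[4V,\infty)$---a Markov-type fact---and this is exactly the regime in which $1-e^{-N(t)}$ is comparable to $N(t)$, allowing the single-item prophet to absorb the entire upper tail with no logarithmic loss. Everything else---computing $V$, $V_{\mathrm{mid}}$, $V_{\mathrm{high}}$, selecting $k^{*}$, and running the two subroutines---is mechanical given the distributional oracle access already assumed for Theorem~\ref{thm:prophet}, yielding a deterministic $O(\log n\cdot\log r)$-competitive algorithm.
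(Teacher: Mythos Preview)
Your proof is correct and follows essentially the same approach as the paper: discard values below $V/\Theta(r)$, bucket the middle range into $O(\log r)$ dyadic intervals and apply Theorem~\ref{thm:prophet} to the best bucket, and handle the upper tail by a single-item argument, choosing deterministically between the two cases. The only difference is cosmetic---the paper handles the tail by greedily selecting every item of value $\ge 2\,OPT$ and bounding the blocking probability via $\sum_i p_i \le \ln 2$, whereas you route through the Krengel--Sucheston inequality together with the lemma $V_{\mathrm{high}} \le 2\,\E[\max_i X_i]$; both rest on the same observation that $\sum_i \Pr[X_i \ge \Theta(OPT)] = O(1)$.
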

\begin{proof}
[Proof of Corollary \ref{cor:prophet} from Theorem \ref{thm:prophet}]
We recover separately the contributions from ``tail'' events (a single
item taking an exceptionally high value) and the ``core'' contribution
that is spread over many items. Run the better of the following two
algorithms:

\paragraph*{Tail }

Let $OPT$ denote the expected offline optimum value. Whenever we
see an item with value at least $2OPT$, we select it. For item $i$,
let $p_{i}=\Pr\left[X_{i}\geq2OPT\right]$. We have 
\[
OPT\geq2OPT\cdot\Pr\left[\exists i\colon X_{i}\geq2OPT\right]=2OPT\cdot\left(1-\prod\left(1-p_{i}\right)\right).
\]
Dividing by $OPT$ and rearranging, we get
\[
1/2\leq\prod\left(1-p_{i}\right)\leq e^{-\sum p_{i}},
\]
and thus 
\[
\sum p_{i}\leq\ln2.
\]

Therefore the probability that we want to take an item but can't is
at most $\ln2$, so this algorithm achieves at least a $\left(1-\ln2\right)$-fraction
of the expected contribution from values greater than $2OPT$.

\paragraph*{Core }

Observe that we can safely ignore values less than $OPT/2r$, as those
can contribute a total of at most $OPT/2$. Partition all remaining
values into $\log r+2$ intervals $\left[OPT/2r,OPT/r\right],\allowbreak\dots,\mbox{\allowbreak}\left[OPT,2OPT\right]$.
The expected contribution from the values in each interval is a $\Omega\left(1/\log r\right)$-fraction
of the expected offline optimum without values greater than $2OPT$.
Pick the interval with the largest expected contribution, round down
all the values in this interval, and run the algorithm guaranteed
by Theorem \ref{thm:prophet}. This achieves an $\Omega\left(\frac{1}{\log n\cdot\log r}\right)$-fraction
of the expected contribution from values less than or equal to $2OPT$.
\end{proof}
The rest of this section is devoted to the proof of Theorem \ref{thm:prophet}.

\subsection{A dynamic potential function}

A natural approach to solving {\sc Downward-Closed Prophet} is the
following meta-algorithm: wait for an item with value $1$, and select
it if that does not decrease some potential function. In Appendix
\ref{sec:Naive-approaches} we informally discuss a few intuitive
potential functions and the difficulties that arise in analyzing each
of them. Here we go directly to the remedy to those obstacles: a ``dynamic''
potential function.

The basic question we want to ask about a new item is the following:
given the realizations we have observed so far and items we have already
selected, is it a good idea to select this item? The main challenge
is to come up with an analyzable proxy to ``good idea''. To this end,
we consider a restricted prophet who is committed to past decisions
-and must also select the current item- but is omniscient about future
realizations; we maintain a target value $\tau$, and ask what is
the probability over future realizations that the restricted prophet
can obtain a solution with value $\tau$. The main novelty in our
analysis is that we dynamically update $\tau$. In particular, as
the prophet becomes more restricted by accumulating commitments, we
compare his solution's value to a lower target $\tau$. 

At each iteration, the algorithm maintains a target value $\tau$
and a target probability $\pi$; $\pi$ is the probability (over future
realizations) that the current restricted prophet beats $\tau$. We
say that an item is {\em good} if selecting it does not decrease
the probability of beating the target value by a factor greater than
$n^{2}$, and {\em bad} otherwise. Notice that all the bad items
together contribute at most a $\left(1/n\right)$-fraction of the
probability of beating $\tau$. As we mentioned above, a key ingredient
is that $\tau$ is updated dynamically. If the probability of observing
a good item with value $1$ is too low (less than $1/3$), we deduct
$1$ from $\tau$. We show (Lemma \ref{lem:main}) that this increases
$\pi$ by a factor of at least $2$. $\pi$ decreases by an $n^{2}$
factor when we select an item, and increases by a factor of $2$ whenever
we deduct $1$ from $\tau$: we balance $2\log n$ deductions for
every item the algorithm selects, and this gives the $O\left(\log n\right)$
competitive ratio. 

So far our algorithm is roughly as follows: set a target value $\tau$;
whenever the probability $\pi$ of reaching the target $\tau$ drops
below $1/3$, decrease $\tau$; if $\pi>1/3$, sit and wait for a
good item with value $1$ - one will arrive with probability at least
$1/3-o\left(1\right)$. There is one more subtlety: what should the
algorithm do if all the good items have value $0$? In other words,
what if the probability of observing a good item with value $1$ is
neither very low nor very close to $1$, say $1/2$ or even $1-\frac{1}{\log n}$?
On one hand, we can't decrease $\tau$ again, because we are no longer
guaranteed a significant increase in $\pi$; on the other hand, after,
say $\Theta\left(\log^{2}n\right)$ iterations, we still have a high
probability of having an iteration where none of the good items has
value $1$. (If no good $1$'s are coming, we don't want the algorithm
to wait forever...) Fortunately, there is a simple solution: the algorithm
waits for the last good item; if, against the odds, no $1$'s have
yet been observed, the algorithm ``hallucinates'' that this last item
has value $1$, and selects it. In expectation, at most a constant
fraction of the items we select will have value $0$, so the competitive
ratio is still $O\left(\log n\right)$.

\subsection{Formal description of the algorithm}

\subsubsection*{Notation}

We let $OPT$ denote the expected (offline) optimum. $W$ is the set
of items selected so far ($W$ for ``Wins''), and $\ell_{W}\triangleq\max\left\{ i\in W\right\} $
is the index of the last selected item.

Let ${\cal F}$ denote the family of all feasible sets. For any $T\subseteq\left[n\right]$,
let ${\cal F}_{T}$ denote the family of feasible sets whose intersection
with $\left\{ 1,\dots,\max\left\{ T\right\} \right\} $ is exactly
$T$. 

Let $X_{i}$ denote the random value of the $i$-th item. We use $z_{i}$
to refer to the observed realization of $X_{i}$. We let $V\left({\cal F},X_{\left[n\right]}\right)\triangleq\max_{S\in{\cal F}}\sum_{i\in S}X_{i}$
denote the value of optimum offline solution (note that this is also
a random variable).

Let $\tau=\tau\left(W\right)$ be the current target value, and $\pi=\pi\left(\tau,W\right)$
denotes the target probability:
\[
\pi\left(\tau,W\right)\triangleq\Pr\left[V\left({\cal F}_{W},X_{\left[n\right]}\right)>\tau\mid X_{\left[\ell_{W}\right]}=z_{\left[\ell_{W}\right]}\right].
\]
We also define $\pi^{j}=\pi^{j}\left(\tau,W\right)$ to be the probability
of reaching $\tau$, given that $j$ is the next item we select. Formally,
\[
\pi^{j}\left(\tau,W\right)\triangleq\Pr\left[V\left({\cal F}_{W\cup\left\{ j\right\} },X_{\left[n\right]}\right)>\tau\mid X_{\left[j\right]}=\left(z_{\left[\ell_{W}\right]},\underbrace{0\ldots0}_{\ell_{_{W}}+1,\dots,j-1},1\right)\right].
\]

We say that a future item is {\em good} (and {\em bad} otherwise)
if $\pi^{j}\geq n^{-2}\cdot\pi$. Finally, $G=G\left(\tau,\pi,W\right)\triangleq\left\{ j>\ell_{W}\colon\pi^{j}\geq n^{-2}\cdot\pi\right\} \cap\left(\bigcup_{S\in{\cal F}_{W}}S\right)$
is the set of items that are both good and feasible, and $A=A\left(G\right)\triangleq\left(\bigvee_{j\in G}X_{j}=1\right)$
is the event that at least one of the good items has value $1$.

(See also list of symbols in Appendix \ref{sec:List-of-symbols}.)

\subsubsection*{Algorithm}

Initialize $\tau\leftarrow OPT/2$ and $W\leftarrow\emptyset$. 

After each update to $W$, decrease $\tau$ until $\Pr\left[A\right]\geq1/3$,
or until $\left|W\right|>\tau$. When $\Pr\left[A\right]\geq1/3$,
reveal the values of items until observing a good item (i.e. some
$j\in G$) with value $1$. When we observe a good item with value
$1$, add it to $W$. If we reached the last good item without observing
any good items with value $1$, add the last good item to $G$ and
subtract $1$ from $\tau$%
. See also pseudocode in Algorithm \ref{alg:prophet}.

\begin{algorithm}
\caption{\label{alg:prophet}Prophet}

\begin{enumerate}
\item $\tau\leftarrow\frac{OPT}{2}$; $W\leftarrow\emptyset$
\item while $\tau>\left|W\right|$:

\begin{enumerate}
\item $\pi\leftarrow\Pr\left[V\left({\cal F}_{W},X_{\left[n\right]}\right)>\tau\mid X_{\left[\ell_{W}\right]}=z_{\left[\ell_{W}\right]}\right]$

{\color{gray-comment} \# $\pi$ is the probability that, given the
history, the offline optimum can still beat $\tau$.}
\item $G\leftarrow\left\{ j>\ell_{W}\colon\pi^{j}\geq n^{-2}\cdot\pi\right\} \cap\left(\bigcup_{S\in{\cal F}_{W}}S\right)$

{\color{gray-comment} \# $G$ is the set of good and feasible items.}
\item if $\Pr\left[A\right]\geq1/3$

{\color{gray-comment} \# A good item with value $1$ is likely to
arrive.}
\begin{enumerate}
\item $j^{*}\leftarrow\min\left\{ j\in G\colon z_{j}=1\right\} $

{\color{gray-comment} \# Wait for a good and feasible item with value
$1$.}
\item if $j^{*}=\infty$

{\color{gray-comment} \# All good items have value $0$.}
\begin{enumerate}
\item $j^{*}\leftarrow\max G$

{\color{gray-comment} \# Select the last item anyway.}
\item $\tau\leftarrow\tau-1$

{\color{gray-comment} \# Adjust the target value to account for select
an item with value $0$}
\end{enumerate}
\item $W\leftarrow W\cup\left\{ j^{*}\right\} $
\end{enumerate}
\item else

\begin{enumerate}
\item $\tau\leftarrow\tau-1$\label{enu:deduct-from-tau}

{\color{gray-comment} \# decrease target value $\tau$ until $\Pr\left[A\right]\geq1/3.$}
\end{enumerate}
\end{enumerate}
\end{enumerate}
\end{algorithm}

\subsection{Analysis}

\subsection*{Concentration}

We want to argue that the value of the optimum concentrates around
its expectation. Proving concentration for a maximum over an arbitrary
family of sets' sums is rather non-trivial. Fortunately, there is
a vast literature on concentration bounds for suprema of empirical
processes. We use the following inequality due to Ledoux. (It is particularly
convenient because the denominator in the exponent depends on the
expected supremum rather on absolute bounds on the values each item
can take.)
\begin{thm}
{\cite[Theorem 2.4]{Ledoux1997}}\label{thm:ledoux} There exists
some constant $K>0$ such that the following holds. Let $Y_{i}$'s
be independent (but not necessarily identical) random variables in
some space $S$; let ${\cal C}$ be a countable class of measurable
functions $f\colon S\rightarrow\left[0,1\right]$; and let $Z=\sup_{f\in{\cal C}}\sum_{i=1}^{n}f\left(Y_{i}\right)$.
Then, 
\[
\Pr\left[Z\geq\E\left[Z\right]+t\right]\le\exp\left(-\frac{t}{K}\cdot\log\left(1+\frac{t}{\E\left[Z\right]}\right)\right).
\]
\end{thm}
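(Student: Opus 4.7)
The plan is to prove this concentration bound via the \emph{entropy method} (Herbst's argument) combined with the tensorization inequality for entropy. The end goal is to control the moment generating function $F(\lambda) \triangleq \E[\exp(\lambda Z)]$ and then apply a Chernoff-style bound, tuning $\lambda$ at the very end to produce the Bennett-type exponent $\frac{t}{K}\log\bigl(1 + t/\E[Z]\bigr)$.

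First, I would view $Z$ as a function of the independent inputs $(Y_1, \ldots, Y_n)$ and invoke the tensorization of entropy,
\[
\mathrm{Ent}\bigl(e^{\lambda Z}\bigr) \;\le\; \sum_{i=1}^{n} \E\!\left[\mathrm{Ent}_{i}\bigl(e^{\lambda Z}\bigr)\right],
\]
where $\mathrm{Ent}_i$ is the entropy taken with respect to $Y_i$ conditional on all other coordinates. To handle each summand I would compare $Z$ to $Z^{(i)}$, the supremum of the same empirical process with the $i$-th term removed, and use the variational representation of entropy against the ``leave-one-out'' trial function $e^{\lambda Z^{(i)}}$. Because each $f \in {\cal C}$ takes values in $[0,1]$, the increment satisfies $0 \le Z - Z^{(i)} \le f^{*}(Y_i) \le 1$ for the (conditional) maximizer $f^{*}$.

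The decisive input is the \emph{self-bounding property} for $[0,1]$-valued functions: for every $f \in {\cal C}$, $\sum_i f(Y_i)^{2} \le \sum_i f(Y_i) \le Z$. Plugging this into the tensorized estimate converts raw increments into a variance-like term and yields
\[
\mathrm{Ent}\bigl(e^{\lambda Z}\bigr) \;\le\; c\,(e^{\lambda}-1-\lambda)\cdot \E\!\left[Z\, e^{\lambda Z}\right]
\]
for an absolute constant $c$. Integrating Herbst's differential inequality $\frac{d}{d\lambda}\!\left(\tfrac{1}{\lambda}\log F(\lambda)\right) \le \tfrac{1}{\lambda^{2}F(\lambda)}\,\mathrm{Ent}(e^{\lambda Z})$ from $0$ then gives $\log F(\lambda) \le \lambda\E[Z] + c\,\E[Z](e^{\lambda}-1-\lambda)$. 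Markov's inequality yields $\Pr[Z\ge \E[Z]+t] \le \exp\!\bigl(-\lambda t + c\,\E[Z](e^{\lambda}-1-\lambda)\bigr)$, and optimizing at $\lambda = \log\bigl(1 + t/(c\,\E[Z])\bigr)$ delivers the advertised tail.

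The main obstacle is the middle step, establishing the self-bounded entropy inequality with $\E[Z]$ in the denominator rather than the worst-case quantity $\sum_i\|f\|_\infty^{2}$: this is precisely where Ledoux's refinement over Talagrand's original product-measure argument lives. It relies on the elementary bound $f^{2}\le f$ specific to $[0,1]$-valued functions together with a careful application of the variational formula for entropy against $e^{\lambda Z^{(i)}}$ -- everything else in the argument (tensorization, Herbst, optimization of $\lambda$) is standard once that inequality is in hand.
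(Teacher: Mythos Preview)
The paper does not prove this statement at all: Theorem~\ref{thm:ledoux} is quoted verbatim from \cite[Theorem~2.4]{Ledoux1997} and used as a black box (only the much weaker Lemma~\ref{lem:concentration} is derived from it). So there is no ``paper's own proof'' to compare your proposal against.

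That said, your sketch is a faithful outline of how Ledoux actually establishes the inequality: tensorization of entropy, the leave-one-out comparison $Z-Z^{(i)}\le f^{*}(Y_i)$, the self-bounding observation $\sum_i f(Y_i)^2\le\sum_i f(Y_i)\le Z$ specific to $[0,1]$-valued summands, Herbst's differential inequality for $\lambda^{-1}\log F(\lambda)$, and finally the optimization in $\lambda$ that produces the Bennett-type exponent. The only caveat is that your middle step---turning the tensorized entropy bound into $\mathrm{Ent}(e^{\lambda Z})\le c(e^{\lambda}-1-\lambda)\,\E[Ze^{\lambda Z}]$---is asserted rather than carried out; that computation (passing from the variational entropy bound against $e^{\lambda Z^{(i)}}$ to a sum involving $f^{*}(Y_i)^2$ and then invoking self-bounding) is where the real work lies, and a complete proof would need to fill it in. But as a high-level roadmap of Ledoux's argument, your proposal is accurate.
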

To make the connection to our setting, let $Y_{i}$ be the vector
in $\left[0,1\right]^{\left|{\cal F}\right|}$ whose $S$-th coordinate
is $X_{i}$ if $i\in S$, and $0$ otherwise. Let $f_{S}\left(Y_{i}\right)\triangleq\left[Y_{i}\right]_{S}$,
so $\sum_{i=1}^{n}f_{S}\left(Y_{i}\right)$ is simply the value of
set $S$. Let ${\cal C}\triangleq\left\{ f_{S}\right\} _{S\in{\cal F}}$.
The above concentration inequality can now be written as
\begin{equation}
\Pr\left[V\left({\cal F},X_{\left[n\right]}\right)\geq OPT+t\right]\le\exp\left(-\frac{t}{K}\cdot\log\left(1+\frac{t}{OPT}\right)\right).\label{eq:V-concentrates}
\end{equation}

In fact, we only need the following much weaker lemma. Notice that
we can assume without loss of generality that $OPT\geq\log n$; otherwise
the trivial greedy algorithm guarantees an expected value of $\Omega\left(\min\left\{ OPT,1\right\} \right)=\Omega\left(OPT/\log n\right)$.
\begin{lem}
\label{lem:concentration}Assume $OPT\geq\Omega\left(\log n\right)$.
Then,
\[
\Pr\left[V\left({\cal F},X_{\left[n\right]}\right)\geq\frac{OPT}{2}\right]>1/4.
\]
\end{lem}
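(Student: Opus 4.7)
Set $\mu := OPT$ and $p := \Pr\!\left[V(\mathcal{F}, X_{[n]}) \geq \mu/2\right]$; the goal is to show $p > 1/4$. Since the concentration bound (\ref{eq:V-concentrates}) is purely an \emph{upper}-tail statement, the plan is to sandwich the truncated expectation $\E\!\left[V\,\mathbf{1}_{V \geq \mu/2}\right]$ between an elementary lower bound coming from $\E[V] = \mu$ and an upper bound obtained by integrating (\ref{eq:V-concentrates}), and then read off $p$ from the squeeze. For the lower bound, I would decompose $\mu = \E[V\,\mathbf{1}_{V < \mu/2}] + \E[V\,\mathbf{1}_{V \geq \mu/2}]$ and bound the first summand by $(\mu/2)(1-p)$ to obtain $\E[V\,\mathbf{1}_{V \geq \mu/2}] \geq \mu/2 + (\mu/2)\,p$. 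For the upper bound, layer cake gives $\E[V\,\mathbf{1}_{V \geq \mu/2}] = (\mu/2)\,p + \int_{\mu/2}^{\infty} \Pr[V > t]\,dt$; on $[\mu/2, \mu]$ the integrand is at most $p$, contributing another $(\mu/2)\,p$, so the problem reduces to controlling the true tail $\int_\mu^\infty \Pr[V > t]\,dt$.

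This tail integral is the one genuinely technical step. After the substitution $t = \mu + s$, (\ref{eq:V-concentrates}) bounds the integrand by $\exp\!\left(-\tfrac{s}{K}\log(1 + s/\mu)\right)$; changing variables $s = \mu u$ and splitting at $u = 1$, for $u \in (0,1]$ the bound $\log(1+u) \geq u/2$ produces Gaussian-type decay $\exp(-\mu u^2/(2K))$, which integrates to $\sqrt{\pi K/(2\mu)}$ and thus contributes $\mu \cdot O(1/\sqrt{\mu}) = O(\sqrt{\mu})$ overall; for $u \geq 1$ the bound $\log(1+u) \geq \log 2$ gives genuine exponential decay $\exp(-\mu u \log 2 /K)$, contributing $O(\exp(-\mu \log 2/K))$, which is $o(1)$ once $OPT \geq \Omega(\log n)$. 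Plugging the resulting $O(\sqrt{\mu})$ estimate into the sandwich yields
\[
\tfrac{\mu}{2} + \tfrac{\mu}{2}\,p \;\leq\; \E\!\left[V\,\mathbf{1}_{V \geq \mu/2}\right] \;\leq\; \mu\,p + O(\sqrt{\mu}),
\]
which rearranges to $1-p \leq O(1/\sqrt{\mu})$. Under $OPT \geq \Omega(\log n)$ (indeed any sufficiently large absolute constant would suffice) this comfortably forces $p > 1/4$. The only substantive point in the whole argument is the Gaussian/exponential split of the tail integral; every other step is bookkeeping around indicator-function expectations and layer-cake identities.
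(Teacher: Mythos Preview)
Your argument is correct and follows the same layer-cake strategy as the paper: write the expectation as an integral of tail probabilities, bound the piece below $\mu/2$ trivially, bound an intermediate piece by $p$, and control the far tail via (\ref{eq:V-concentrates}). The differences are in the bookkeeping and the tail cut. The paper decomposes $OPT=\int_{-OPT}^{\infty}\Pr[V\ge OPT+t]\,dt$ into the three ranges $[-OPT,-OPT/2]$, $[-OPT/2,OPT]$, $[OPT,\infty)$; the middle range (values in $[OPT/2,2OPT]$) is bounded by $\tfrac{3}{2}OPT\cdot p$, and on the third range the concentration bound already has exponential decay (since $t\ge OPT$ forces $\log(1+t/OPT)\ge\log 2$), so that integral is $O(e^{-cOPT})=o(1)$ with no Gaussian regime needed. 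This yields $p\ge 1/3-o(1)$.

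Your version instead stops the ``bounded by $p$'' interval at $\mu$ rather than $2\mu$, which forces you to analyze the tail starting at $\mu$; hence the extra Gaussian/exponential split and the $O(\sqrt{\mu})$ term. The payoff is a quantitatively stronger conclusion $1-p\le O(1/\sqrt{OPT})$, i.e.\ $p\to 1$, whereas the paper only needs and only proves $p\ge 1/3-o(1)$. So your route is slightly more work for a sharper (but here unnecessary) bound; pushing the cut out to $2\mu$ as the paper does would let you drop the Gaussian piece entirely.
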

\begin{proof}
We have, 
\begin{align}
OPT & =\int_{-OPT}^{\infty}\Pr\left[V\left({\cal F},X_{\left[n\right]}\right)\geq OPT+t\right]dt,\label{eq:OPT-integral}
\end{align}
which can be decomposed as to integrals over $\left[-OPT,-OPT/2\right]$,
$\left[-OPT/2,OPT\right]$, and $\left[OPT,\infty\right]$. 

The first two integrals can be easily bounded as
\[
\int_{-OPT}^{-OPT/2}\Pr\left[V\left({\cal F},X_{\left[n\right]}\right)\geq OPT+t\right]dt\leq\int_{-OPT}^{-OPT/2}1\cdot dt\leq\frac{OPT}{2}
\]
and
\begin{eqnarray*}
\int_{-OPT/2}^{OPT}\Pr\left[V\left({\cal F},X_{\left[n\right]}\right)\geq OPT+t\right]dt & \leq & \int_{-OPT/2}^{OPT}\Pr\left[V\left({\cal F},X_{\left[n\right]}\right)\geq OPT/2\right]dt\\
 & \leq & \frac{3OPT}{2}\cdot\Pr\left[V\left({\cal F},X_{\left[n\right]}\right)>\frac{OPT}{2}\right].
\end{eqnarray*}

For the third integral we use the concentration bound (\ref{eq:V-concentrates}):
\begin{eqnarray*}
\int_{OPT}^{\infty}\Pr\left[V\left({\cal F},X_{\left[n\right]}\right)\geq OPT+t\right]dt & \leq & \int_{OPT}^{\infty}\exp\left(-\frac{t}{K}\cdot\log\left(1+\frac{t}{OPT}\right)\right)dt\\
 & \leq & \int_{OPT}^{\infty}\exp\left(-\frac{t}{K}\right)dt\\
 & = & \left[Ke^{-t/K}\right]_{OPT}^{\infty}=K\cdot e^{-OPT/K},
\end{eqnarray*}
which is negligible since $OPT=\omega\left(1\right)$. 

Plugging into (\ref{eq:OPT-integral}), we have:
\[
OPT\leq\frac{OPT}{2}+\frac{3OPT}{2}\cdot\Pr\left[V\left({\cal F},X_{\left[n\right]}\right)>\frac{OPT}{2}\right]+o\left(1\right),
\]
and after rearranging we get
\[
\Pr\left[V\left({\cal F},X_{\left[n\right]}\right)>\frac{OPT}{2}\right]\geq1/3-o\left(1\right).
\]
\end{proof}

\subsection*{Main lemma}
\begin{lem}
\label{lem:main}At any point during the run of the algorithm, if
$\Pr\left[A\right]\leq1/3$, then subtracting $1$ from $\tau$ doubles
$\pi$; i.e. 
\[
\Pr\left[V\left({\cal F}_{W},X_{\left[n\right]}\right)>\tau-1\mid X_{\left[\ell_{W}\right]}=z_{\left[\ell_{W}\right]}\right]\geq2\Pr\left[V\left({\cal F}_{W},X_{\left[n\right]}\right)>\tau\mid X_{\left[\ell_{W}\right]}=z_{\left[\ell_{W}\right]}\right].
\]
\end{lem}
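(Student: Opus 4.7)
The plan is to upper-bound $\pi=\pi(\tau,W)$ using the hypothesis $\Pr[A]\le 1/3$, and then exploit the integer-valuedness of $V$: since $V\in\mathbb{Z}_{\ge 0}$, we have $\pi(\tau-1)-\pi(\tau)=\Pr[V=m\mid X_{[\ell_W]}=z_{[\ell_W]}]$, where $m$ is the unique integer in $(\tau-1,\tau]$, so the claim is equivalent to $\Pr[V=m\mid\cdot]\ge\pi$.

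To upper-bound $\pi$, I would decompose the event $\mathcal E=\{V(\mathcal F_W,X_{[n]})>\tau\}$ using the disjoint events $H_j=\{X_{\ell_W+1}=\cdots=X_{j-1}=0,\,X_j=1\}$ for $j>\ell_W$. The while-loop condition gives $\tau>|W|\ge w\triangleq\sum_{i\in W}z_i$, so $\mathcal E$ cannot hold when all future items are $0$, whence $\pi=\sum_{j>\ell_W}\Pr[\mathcal E\cap H_j]$. Splitting the sum: for $j\in G$, $\Pr[\mathcal E\cap H_j]\le\Pr[H_j]$, and since the $H_j$'s with $j\in G$ are disjoint sub-events of $A$, this part sums to at most $\Pr[A]\le 1/3$. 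For $j\notin G$ one splits the witness $S\in\mathcal F_W$ into those containing $j$ (i.e., $S\in\mathcal F_{W+j}$) and those not; the ``with $j$'' branch contributes $\Pr[H_j]\cdot\pi^j<n^{-2}\pi\cdot\Pr[H_j]$, summing to $<\pi/n$ over all bad/infeasible $j$, while the ``without $j$'' branch is accounted for by the smaller sub-problem that begins after $j$.

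For the lower bound on $\Pr[V=m]$, I would set up a coupling: map each $\omega\in\mathcal E$ to the outcome $\omega'$ obtained by flipping the first future value-$1$ item of $\omega$ to $0$. Since $V$ changes by at most one under such a flip, $V(\omega')\ge V(\omega)-1>\tau-1$, and when $V(\omega)=m+1$ one even has $V(\omega')=m$. Combined with the upper bound $\pi\lesssim\Pr[A]$ from the previous paragraph and the intuition that the mass contributing to $\Pr[A]$ concentrates on $V=m$ rather than $V\ge m+1$ (since good items are rare, firing more than one is even rarer), this should yield $\Pr[V=m]\ge\pi$ with a safety factor of $2$.

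The main obstacle is rigorously handling the ``without $j$'' recursive terms in the upper-bound decomposition; they involve witnesses that avoid the first future $1$ and appear to require either an inductive argument on the remaining items or a more delicate probabilistic accounting. A secondary obstacle is calibrating constants so that the final inequality holds with factor exactly $2$ rather than $2-o(1)$; the $\pi/n$ slack from the bad items needs to be charged carefully, and one must also verify that the coupling in the third paragraph respects the product measure on the independent $X_i$'s.
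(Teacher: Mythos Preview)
Your outline has the right ingredients---the bad-item bound $O(\pi/n)$ and a one-bit flipping idea---but the way you combine them cannot produce the factor of~$2$. The core gap is how you use the hypothesis $\Pr[A]\le 1/3$. Bounding the good-$j$ part of your $H_j$-decomposition by $\sum_{j\in G}\Pr[H_j]\le\Pr[A]\le 1/3$ yields only an \emph{absolute} bound $\pi\lesssim 1/3$, which says nothing about the ratio $\pi(\tau-1)/\pi(\tau)$. The paper instead uses $\Pr[A]$ \emph{multiplicatively}: from $\Pr[\mathcal E\wedge\neg A]\le\pi/n$ one gets $\Pr[\mathcal E\wedge A]\ge(1-1/n)\pi$, and dividing by $\Pr[A]\le 1/3$ yields $\Pr[\mathcal E\mid A]\ge 3(1-o(1))\pi$. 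The proof is then reduced to the single inequality $\Pr[V>\tau-1]\ge\Pr[\mathcal E\mid A]$.

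Your coupling does not establish that last inequality, and as written it is vacuous: flipping the first future $1$ to $0$ sends $\omega\in\mathcal E$ to some $\omega'$ with $V(\omega')>\tau-1$, but already $\mathcal E\subset\{V>\tau-1\}$; moreover the map is not measure-preserving (the likelihood ratio at the flipped coordinate is $(1-p_j)/p_j$), and the claim ``when $V(\omega)=m+1$ one even has $V(\omega')=m$'' is false---the flip may leave $V$ unchanged if the optimal witness does not use that item. What the paper does instead is compare \emph{conditional} laws: it partitions $A=\bigsqcup_j A^j$ where $A^j$ is the event that $j$ is the first \emph{good} item with $X_j=1$, pairs each $A^j$ with $B^j$ (same event but with $X_j=0$), and observes that the two conditional laws differ in the single coordinate $X_j$; hence $\Pr[V>\tau\mid A^j]\le\Pr[V>\tau-1\mid B^j]$, and monotonicity of $V$ in the $X_i$'s gives $\Pr[V>\tau-1\mid B^j]\le\Pr[V>\tau-1]$. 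Averaging over $j$ finishes. Note also that indexing by the first \emph{good} $1$ (the $A^j$'s) rather than the first $1$ of any kind (your $H_j$'s) is precisely what dissolves the recursion you flagged: under $\neg A$ every good item is $0$, so the first value-$1$ item appearing in any witness is automatically bad, and the $\pi/n$ bound on $\Pr[\mathcal E\wedge\neg A]$ follows in one line without induction.
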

\begin{proof}
We first claim that the probability that the optimum solution (conditioned
on the items $W$ we already selected and the realizations $z_{\left[\ell_{W}\right]}$
we have already seen) reaches $\tau$ without selecting any items
from $G$ is negligible with respect to the total probability of reaching
$\tau$. For each $k\notin G$, we have, by definition of $G$, 
\[
\underbrace{\Pr\left[V\left({\cal F}_{W\cup\left\{ k\right\} },X_{\left[n\right]}\right)>\tau\mid X_{\left[k\right]}=\left(z_{\left[\ell_{W}\right]},0\dots0,1\right)\right]}_{\pi^{k}}<n^{-2}\cdot\underbrace{\Pr\left[V\left({\cal F}_{W},X_{\left[n\right]}\right)>\tau\mid X_{\left[\ell_{W}\right]}=z_{\left[\ell_{W}\right]}\right]}_{\pi}.
\]
However, if $X_{j}=0$ for all $j\in G$ (i.e. if event $A$ does
not happen), the first item that contributes to the optimum solution
has to be some $k\notin G$. Thus the probability that it still reaches
$\tau$ is very small: 
\begin{flalign*}
\Pr\left[\left(V\left({\cal F}_{W},X_{\left[n\right]}\right)>\tau\right)\wedge\left(\neg A\right)\mid\left(X_{\left[\ell_{W}\right]}=z_{\left[\ell_{W}\right]}\right)\right] & \leq\sum_{k\notin G}\Pr\left[\left({\cal F}_{W\cup\left\{ k\right\} },X_{\left[n\right]}\right)>\tau\mid X_{\left[k\right]}=\left(z_{\left[\ell_{W}\right]},0\dots0,1\right)\right]\\
 & \leq n^{-1}\cdot\Pr\left[V\left({\cal F}_{W},X_{\left[n\right]}\right)>\tau\mid X_{\left[\ell_{W}\right]}=z_{\left[\ell_{W}\right]}\right].
\end{flalign*}
In particular, most of the probability of reaching $\tau$ comes from
the event $A$ (i.e. $X_{j}=1$ for some $j\in G$):
\[
\Pr\left[\left(V\left({\cal F}_{W},X_{\left[n\right]}\right)>\tau\right)\wedge A\mid X_{\left[\ell_{W}\right]}=z_{\left[\ell_{W}\right]}\right]\geq\left(1-n^{-1}\right)\Pr\left[V\left({\cal F}_{W},X_{\left[n\right]}\right)>\tau\mid X_{\left[\ell_{W}\right]}=z_{\left[\ell_{W}\right]}\right].
\]

Inconveniently, this does not guarantee that $\Pr\left[A\right]$
is large, because the probability of reaching $\tau$ may be very
small to begin with, and those events are not independent. Instead,
rewrite the LHS of the above equation as
\[
\Pr\left[\left(V\left({\cal F}_{W},X_{\left[n\right]}\right)>\tau\right)\wedge A\mid X_{\left[\ell_{W}\right]}=z_{\left[\ell_{W}\right]}\right]=\underbrace{\Pr\left[A\right]}_{\leq1/3}\Pr\left[V\left({\cal F}_{W},X_{\left[n\right]}\right)>\tau\mid\left(X_{\left[\ell_{W}\right]}=z_{\left[\ell_{W}\right]}\right)\wedge A\right]
\]
 By the premise, this implies a strong lower bound on the second term
on the RHS. Finally, we claim that conditioning on $A$ can contribute
at most $1$ to the optimum $V\left({\cal F}_{w},X_{\left[n\right]}\right)$.
Assuming this claim, the proof of the lemma is complete:
\begin{align}
\hspace{-0.1cm}\hspace{-0.1cm}\Pr\left[V\left({\cal F}_{W},X_{\left[n\right]}\right)>\tau-1\mid X_{\left[\ell_{W}\right]}=z_{\left[\ell_{W}\right]}\right] & \hspace{-0.1cm}\geq\Pr\left[V\left({\cal F}_{W},X_{\left[n\right]}\right)>\tau\mid\left(X_{\left[\ell_{W}\right]}=z_{\left[\ell_{W}\right]}\right)\wedge A\right]\label{eq:claim}\\
 & \hspace{-0.1cm}\geq\left(3-o\left(1\right)\right)\Pr\left[V\left({\cal F}_{W},X_{\left[n\right]}\right)>\tau\mid X_{\left[\ell_{W}\right]}=z_{\left[\ell_{W}\right]}\right].\nonumber 
\end{align}

It remains to prove (\ref{eq:claim}). Rewrite $A$ as a disjoint
union of the events $A^{j}\triangleq\left(X_{j}=1\bigwedge_{\stackrel{i<j}{i\in G}}X_{i}=0\right)$,
i.e. $A^{j}$ is the event that $j$ is the first good item with value
$1$. Let $\pi_{A}$ denote the probability of beating the target
value conditioning on the history and event $A$, and analogously
for $\pi_{A^{j}}$:
\begin{eqnarray*}
\pi_{A} & \triangleq & \Pr\left[V\left({\cal F}_{W},X_{\left[n\right]}\right)>\tau\mid\left(X_{\left[\ell_{W}\right]}=z_{\left[\ell_{W}\right]}\right)\wedge A\right]\\
\pi_{A^{j}} & \triangleq & \Pr\left[V\left({\cal F}_{W},X_{\left[n\right]}\right)>\tau\mid\left(X_{\left[\ell_{W}\right]}=z_{\left[\ell_{W}\right]}\right)\wedge A^{j}\right].
\end{eqnarray*}

Let $B^{j}\triangleq\left(X_{j}=0\bigwedge_{\stackrel{i<j}{i\in G}}X_{i}=0\right)$
be the event that neither the $j$-th item nor any of the preceding
good items have value $1$. Changing the value of $X_{j}$ can decrease
the final value of the solution by at most $1$. Thus, for every $j$,
\begin{eqnarray*}
\pi_{A^{j}} & \leq & \Pr\left[V\left({\cal F}_{W},X_{\left[n\right]}\right)>\tau-1\mid\left(X_{\left[\ell_{W}\right]}=z_{\left[\ell_{W}\right]}\right)\wedge B^{j}\right]\\
 & \leq & \Pr\left[V\left({\cal F}_{W},X_{\left[n\right]}\right)>\tau-1\mid X_{\left[\ell_{W}\right]}=z_{\left[\ell_{W}\right]}\right],
\end{eqnarray*}
where the last inequality follows because conditioning on $B^{j}$
can only decrease the probability of reaching the target value.

Finally, since $A$ is a disjoint union of the $A^{j}$'s, it follows
that $\pi_{A}$ is a convex combination of the $\pi_{A^{j}}$'s. Therefore,
\[
\Pr\left[V\left({\cal F}_{W},X_{\left[n\right]}\right)>\tau\mid X_{\left[\ell_{W}\right]}=z_{\left[\ell_{W}\right]}\wedge A\right]=\pi_{A}\leq\Pr\left[V\left({\cal F}_{W},X_{\left[n\right]}\right)>\tau-1\mid X_{\left[\ell_{W}\right]}=z_{\left[\ell_{W}\right]}\right].
\]
\end{proof}

\subsection*{Putting it all together}
\begin{lem}
\label{lem:potential}At any point during the run of the algorithm,
\[
\tau\geq\frac{OPT}{2}-\left(2\log n+1\right)\cdot\left|W\right|-2
\]
\end{lem}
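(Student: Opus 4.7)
The plan is to apply a potential-function argument with $\Phi = \log\pi$ (base $2$), charging each decrement of $\tau$ either to a compensating gain in $\Phi$ or directly to an item added to $W$. By Lemma \ref{lem:concentration}, the initial value of $\pi$ (at $\tau=OPT/2$, $W=\emptyset$) is at least $1/4$, so $\Phi$ starts out at least $-2$. On the other hand $\Phi\leq 0$ throughout, since $\pi$ is a probability.

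Next, I would check how each step of Algorithm \ref{alg:prophet} moves $\Phi$. The else-branch (step 2(d)) is exactly the hypothesis of Lemma \ref{lem:main}: $\tau$ drops by $1$ while $W$ is unchanged and $\pi$ at least doubles, so $\Phi$ grows by at least $1$. In the branch where $\Pr[A]\geq 1/3$ and some good item arrives with value $1$ (step 2(c)i), $W$ gains one good item while $\tau$ stays fixed; by the definition of \emph{good}, the new $\pi$ is at least $n^{-2}$ times the old, so $\Phi$ drops by at most $2\log n$. In the remaining branch (step 2(c)ii), all good items have value $0$, $\max G$ is appended to $W$, and $\tau$ is simultaneously decremented by $1$; a coupling (replacing $X_{j^*}=1$ by $X_{j^*}=0$ lowers the value of every feasible set containing $j^*$ by exactly $1$) shows that the post-update $\pi$ still matches $\pi^{j^*}\geq n^{-2}$ times the old $\pi$, so $\Phi$ again drops by at most $2\log n$.

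Collecting these, let $d$ denote the total number of else-branch decrements performed so far. Telescoping the per-step bounds gives
\[
\Phi \;\geq\; -2 \;+\; d \;-\; 2|W|\log n,
\]
and the upper bound $\Phi\leq 0$ yields $d \leq 2 + 2|W|\log n$. The remaining $\tau$-decrements occur only in step 2(c)ii and are in one-to-one correspondence with value-$0$ selections, so there are at most $|W|$ of them; hence the total number of $\tau$-decrements from the initial value $OPT/2$ is at most $2 + (2\log n + 1)|W|$, which rearranges to the claim.

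The main subtlety is the coupling argument for step 2(c)ii: one must verify that after decrementing $\tau$ by $1$ and conditioning on $X_{j^*}=0$, the resulting probability is at least the ``hypothetical'' $\pi^{j^*}$ from the definition of $G$, which conditions on $X_{j^*}=1$ at the \emph{original} $\tau$. Because every set in ${\cal F}_{W+j^*}$ contains $j^*$, flipping $X_{j^*}$ from $1$ to $0$ shifts the value of that set by exactly $1$, and the $-1$ in $\tau$ absorbs this shift exactly. Once this is in place, the rest of the bookkeeping is immediate from Lemma \ref{lem:main} and the definition of $G$.
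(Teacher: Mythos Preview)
Your proof is correct and follows essentially the same potential-function argument as the paper: both track $\log\pi$, use Lemma~\ref{lem:concentration} for the base case, charge each selection $2\log n$ via the definition of $G$, credit each else-branch decrement with a doubling of $\pi$ via Lemma~\ref{lem:main}, and then use $\log\pi\le 0$ to bound the total number of $\tau$-decrements. The only cosmetic difference is that the paper maintains the single invariant $\log\pi\ge -2-(2\log n+1)|W|+(OPT/2-\tau)$ directly, whereas you separate out the else-branch count $d$ and bound it afterward; your explicit coupling for the hallucination branch (step~2(c)ii) is in fact more careful than the paper's one-line treatment of that case.
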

\begin{proof}
We prove by induction that at any point during the run of the algorithm,
\begin{equation}
\log\pi\geq-2-\left(2\log n+1\right)\cdot\left|W\right|+\left(\frac{OPT}{2}-\tau\right).\label{eq:induction}
\end{equation}
After initialization, $\log\pi\geq-2$ by Lemma \ref{lem:concentration}.
By definition of $G$, whenever we add an item to $W$, we decrease
$\log\pi$ by at most $2\log n$ - hence the $2\log n\cdot\left|W\right|$
term. Notice that when the algorithm ``hallucinates'' a $1$, we also
decrease $\tau$ by $1$ to correct for the hallucination - at any
point during the run of the algorithm, this has happened at most $\left|W\right|$
times. Recall that we may also decrease $\tau$ in the last line of
Algorithm \ref{alg:prophet} (in order to increase $\pi$); whenever
we do this, $\tau$ decreases by $1$, but $\pi$ doubles (by Lemma
\ref{lem:main}), so $\log\pi$ increases by $1$, and Inequality
(\ref{eq:induction}) is preserved. 

Finally, since $\pi$ is a probability, we always maintain $\log\pi\leq0$.
\end{proof}
We are now ready to complete the proof of Theorem \ref{thm:prophet}.
\begin{proof}
[Proof of Theorem \ref{thm:prophet}]The algorithm always terminates
after at most $O\left(OPT\right)$ decreases to the value of $\tau$.
By Lemma \ref{lem:potential}, when the algorithm terminates, we have
$\left|W\right|\geq\tau\geq\frac{OPT}{2}-\left(2\log n+1\right)\cdot\left|W\right|-2$,
and therefore in particular $\left|W\right|\geq\frac{OPT-4}{4\log n+4}$. 

Finally, recall that sometimes the algorithm ``hallucinates'' good
realizations, i.e. for some items $i\in W$ that we select, $X_{i}=0$.
However, each time we add an item, the probability that we add a zero-value
item is at most $2/3$ (by the condition $\Pr\left[A\right]>1/3$).
Therefore in expectation the value of the algorithm is at least $\left|W\right|/3$.
\end{proof}

\section{\label{sec:Secretary}Secretary}

We now discuss our algorithm for the {\sc Downward-Closed Secretary}
problem. Recall that as we mentioned earlier, it is very different
from our algorithm for {\sc Downward-Closed Prophet}.
\begin{defn}
[{\sc Downward-Closed Secretary}] Consider $n$ items with values
$\left\{ y_{i}\right\} $ and an arbitrary downward-closed set system
${\cal F}$ over the items; both $\left\{ y_{i}\right\} $ and ${\cal F}$
are adversarially chosen. The algorithm receives as input $n$ (but
not ${\cal F}$ or $\left\{ y_{i}\right\} $). The items arrive in
a uniformly random order. Initialize $W$ as the empty set. When item
$i$ arrives, the algorithm observes its value and all feasible sets
it forms with items that have previously arrived. The algorithm then
decides (immediately and irrevocably) whether to add $i$ to the set
$W$, subject to the constraint that $W$ remains a feasible set in
${\cal F}$. The goal is to maximize the sum of values of items in
$W$. 
\end{defn}
\begin{rem}
\label{rem:order-oblivious}We remark that our analysis for the $\left\{ 0,1\right\} $-valued
case does not require a uniformly random arrival order. Rather, partition
the time intervals into pairs $\left\{ j,n/2+j\right\} $; the items
can be assigned arbitrarily to pairs of arriving times, and we only
require that the choice of which of the two items arrives at time
$j$ and which at time $n/2+j$ is random. 
\end{rem}
\begin{thm}
\label{thm:secretary}When $y_{i}\in\left\{ 0,1\right\} $ for all
$i$, there is a deterministic algorithm for {\sc Downward-Closed Secretary}
that achieves a competitive ratio of $O\left(\log n\right)$.
\end{thm}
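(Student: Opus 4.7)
My approach is a sample-and-select scheme combined with a potential-function analysis that exploits the pair-wise random ordering guaranteed in the remark preceding the theorem. The algorithm uses the first $n/2$ arrivals as a non-selecting sample phase and then runs a greedy-with-potential selector on the second half, with the $O(\log n)$ factor arising from a pigeonhole over $\log n$ levels of potential decrease.

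Step 1 (Sample quality). Fix an offline optimum $O^\star\subseteq\{i:y_i=1\}$ of size $\mathrm{OPT}$. Under the paired random ordering, each element of $O^\star$ lands in the sample with probability at least $1/2$, so $\E\bigl[|O^\star\cap S|\bigr]\geq\mathrm{OPT}/2$. By downward-closedness $O^\star\cap S\in\mathcal{F}$, hence with constant probability the sample supports a feasible 1-valued witness $T$ of size $R=\Omega(\mathrm{OPT})$; the algorithm computes any such $T$ (via the feasibility oracle) at the end of the sample phase.

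Step 2 (Selection rule and easy direction). Define the potential
\[
\phi(W)=\max\bigl\{\,|T'| : T'\subseteq T,\ W\cup T'\in\mathcal{F}\,\bigr\},
\]
so $\phi(\emptyset)=|T|=R$. When an item $i$ in the selection half arrives with $y_i=1$ and $W\cup\{i\}\in\mathcal{F}$, I admit $i$ to $W$ if and only if $\phi(W)-\phi(W\cup\{i\})\leq C\log n$ for an appropriate absolute constant $C$; otherwise I skip it. Since $\phi$ is nonnegative and starts at $R$, the algorithm can admit at most $R/(C\log n)$ items; the goal is to show it admits \emph{at least} $\Omega(R/\log n)$, which gives the claimed $O(\log n)$ ratio.

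Step 3 (Enough admissible arrivals). The lower bound on $|W|$ uses the within-pair random order. For a pair $(a,b)$ whose sample representative $a$ belongs to $T$, couple the two executions of the algorithm that differ only in the coin for this pair; swapping the coin replaces $a$ by $b$ in the selection half but leaves every other pair's contents untouched. I bucket the pairs meeting $T$ into $O(\log n)$ classes according to the $\phi$-drop that admitting the pair's selection-half representative would cause; one class must contain an $\Omega(1/\log n)$ fraction of the $|T|=\Omega(\mathrm{OPT})$ pairs, and within that class the admission criterion fires. Summing over such pairs gives $\E[|W|]=\Omega(\mathrm{OPT}/\log n)$.

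The hard part will be Step 3: pinning down the admission criterion in the presence of accumulated state $W$. Switching a single coin can in principle alter the admissions of earlier pairs, so I must argue that the $\phi$-bucket of a pair's selection-half representative is determined by the sample and the adversarial pair structure alone, independently of earlier coin flips. I expect this to follow by exposing the coins one pair at a time in arrival order and using downward-closedness to preserve the monotonicity of $\phi$ under either coin setting---a technique in the spirit of, though, as the author notes, simpler than, the dynamic-target argument of Lemma~\ref{lem:main}.
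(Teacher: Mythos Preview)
Your Step~3 is where the plan breaks down, and the gaps are structural rather than details to be filled in. First, the witness $T$ is computed from the sample, while the items that actually arrive in the selection half are the pair-mates of sample items---not elements of $T$. There is no reason the pair-mate $b$ of some $a\in T$ should have small $\phi$-drop: $b$ may have value $0$, or may conflict with all of $T$. Your coupling idea (swap one coin so that $a$ itself arrives during selection) is circular: swapping removes $a$ from the sample, so $T$, which is a function of the sample, changes. Hence the invariant you hope for---that a pair's $\phi$-bucket ``is determined by the sample and the adversarial pair structure alone, independently of earlier coin flips''---cannot hold; $T$ already depends on those flips. Second, even granting a fixed $T$, pigeonhole over $O(\log n)$ drop-buckets only says some bucket contains $\Omega(|T|/\log n)$ pairs; nothing prevents that bucket from being a \emph{high}-drop bucket, in which case your admission rule rejects every one of them. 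You would need the total drop across all candidates to be $O(|T|)$, but drops depend on the evolving state $W$ and do not telescope. (Incidentally, the sentence in Step~2 is backward: admitted items have drop \emph{at most} $C\log n$, which yields no upper bound on $|W|$.)

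The paper circumvents all of this with a \emph{sliding} potential $V({\cal F}_W;\,U\cup W)$, where $U$ shrinks by removing $\sigma_j$ immediately before $\sigma_{n/2+j}$ is considered. This manufactures a per-step symmetry: the event $B_j$ that forgetting $\sigma_j$ decreases the potential is exchangeable with the event $G_j$ that adding $\sigma_{n/2+j}$ increases it. Then $\sum B_j$ telescopes to the initial potential $V({\cal F};U_0)=\Theta(OPT)$, while $|W|=\min\{\tau,\sum G_j\}$. Freedman's martingale inequality bounds $\sum(B_j-G_j)$ with failure probability $\exp(-\Omega(OPT))$, and the $O(\log n)$ factor enters via a union bound over the at most $n^{2\tau}$ possible trajectories $(\widehat W,\widehat\sigma|_{\widehat W})$, which forces $\tau=\Theta(OPT/\log n)$. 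So the $\log n$ arises from the union bound over algorithm histories, not from bucketing potential drops.
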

Let $r$ denote the maximum cardinality of a feasible set $S\in{\cal F}$. 
\begin{cor}
\label{cor:secretary}For general valuations, there is a randomized
algorithm for {\sc Downward-Closed Secretary} that achieves a competitive
ratio of $O\left(\log n\cdot\log r\right)$.
\end{cor}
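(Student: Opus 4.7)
The plan mirrors the proof of Corollary \ref{cor:prophet}: decompose the offline optimum into a ``tail'' contribution (a single large-value item) and a ``core'' contribution (many items at comparable scales, partitioned into $O(\log r)$ geometric intervals), and randomize between the two natural strategies. The Secretary-specific twists are (i) we do not know $OPT$ in advance and so must devote the first $n/2$ arrivals to a sampling phase that estimates the relevant scale, and (ii) we lean on the random arrival order to ensure that the reduction to a $\{0,1\}$-valued instance on the second half preserves a constant fraction of the optimum.

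Concretely, flip a fair coin. With probability $1/2$ run Strategy Tail: the classical single-item secretary algorithm, which achieves $\Omega(v^*)$ in expectation where $v^* := \max_i y_i$. With probability $1/2$ run Strategy Core: observe without selecting the first $n/2$ items, let $\hat v$ be the maximum value seen, draw $k$ uniformly at random from $\{-L,\ldots,L\}$ with $L := \lceil\log(4r)\rceil$, and set $\theta := \hat v\cdot 2^{-k}$; then feed the remaining items into the algorithm of Theorem \ref{thm:secretary}, treating item $i$ as having $\{0,1\}$-value $\mathbf{1}[\theta \le y_i \le 2\theta]$.

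For the analysis, let $S^*$ be an offline optimum of value $OPT$. If $v^*\ge OPT/4$, Strategy Tail alone is $O(1)$-competitive. Otherwise, items of value $\le OPT/(4r)$ contribute at most $OPT/4$ to $OPT$, so values in $[OPT/(4r),v^*]$ contribute $\ge OPT/2$; partitioning this range into $O(\log r)$ geometric sub-intervals and averaging, one sub-interval $[a,2a]$ contributes $\Omega(OPT/\log r)$ and therefore contains $m=\Omega(OPT/(a\log r))$ items of $S^*$. I would then show that two good events hold jointly with probability $\Omega(1/\log r)$: (E1) the drawn $k$ satisfies $\theta\in[a/2,a]$, so $[\theta,2\theta]\supseteq[a,2a]$; and (E2) at least $m/4$ items of $S^*\cap[a,2a]$ fall in the second half (by Markov on an expectation of $m/2$). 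For (E1) note that some $i^*\in S^*$ has $y_{i^*}\ge OPT/r$ by averaging over $|S^*|\le r$; with probability $1/2$ that $i^*$ lies in the first half, forcing $\hat v\ge OPT/r$, and combined with the trivial bound $\hat v\le v^*\le OPT$ this gives $\hat v/a\in[1/r,4r]$, so exactly one $k\in\{-L,\ldots,L\}$ realizes $\theta\in[a/2,a]$ and is selected with probability $\Omega(1/\log r)$. Conditioned on (E1) and (E2), the second-half $\{0,1\}$-instance (whose items arrive in uniformly random order conditional on the split) has $\{0,1\}$-offline-optimum $\ge m/4$ by downward-closedness, so Theorem \ref{thm:secretary} selects $\Omega(m/\log n)$ of these items, each of true value $\ge a$, for total expected value $\Omega(ma/\log n)=\Omega(OPT/(\log n\cdot \log r))$.

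The delicate step is calibrating the range of candidate scales: it must be wide enough to cover every plausible ratio $\hat v/a$ even when the first-half sample underestimates the scale, yet narrow enough that the uniform guess over $k$ costs only $O(\log r)$ overall. The averaging observation that some $i^*\in S^*$ has $y_{i^*}\ge OPT/r$ is precisely what pins $\hat v$ in a poly$(r)$-window around $a$ and makes the choice $L=O(\log r)$ sufficient, closing the reduction at the claimed $O(\log n\cdot \log r)$ competitive ratio.
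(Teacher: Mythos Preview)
Your approach is essentially the same as the paper's: randomize between the classical single-item secretary algorithm (for the tail) and a geometric bucketing reduction to the $\{0,1\}$ case (for the core), using the first half as a sampling phase to learn the relevant scale. The paper packages the two strategies slightly more elegantly---it runs the classical secretary algorithm on the first $n/2$ items, which with constant probability selects the maximum $M$, and with constant probability merely observes $M$ without selecting it, at which point it picks a random bucket in $[M/r,M]$ and runs Theorem~\ref{thm:secretary} on the second half---but this is a cosmetic difference, and your explicit event~(E2) in fact makes the ``second half retains a constant fraction'' step more transparent than the paper's sketch.

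There is, however, a concrete technical error. You claim that $\theta\in[a/2,a]$ implies $[\theta,2\theta]\supseteq[a,2a]$; this is false (take $\theta=a/2$, so $[\theta,2\theta]=[a/2,a]$ misses every item with value in $(a,2a]$). Hence the items of $S^*\cap[a,2a]$ need not receive $\{0,1\}$-value $1$ under your indicator $\mathbf{1}[\theta\le y_i\le 2\theta]$, and the claimed lower bound on the second-half $\{0,1\}$-optimum does not follow. The fix is easy: either widen the window to $\mathbf{1}[\theta\le y_i\le 4\theta]$ (then $\theta\in[a/2,a]$ genuinely gives $[\theta,4\theta]\supseteq[a,2a]$, and each selected item still has true value at least $\theta\ge a/2$), or more simply use the one-sided threshold $\mathbf{1}[y_i\ge\theta]$, which can only increase the $\{0,1\}$-optimum while still guaranteeing value $\ge\theta$ per selected item. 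With this correction the argument goes through; you should also tighten the joint-probability claim for (E1)$\wedge$(E2) a bit---the cleanest route is to condition on the globally maximal item landing in the first half (probability $1/2$), which pins $\hat v=v^*$ exactly and makes the choice of $k$ independent of the split, after which Markov on the remaining target items yields $\Pr[(E2)\mid\hat v=v^*]\ge\Omega(1)$.
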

\begin{proof}
[Proof of Corollary \ref{cor:secretary} from Theorem \ref{thm:secretary}]
Run the classic secretary algorithm over the first $n/2$ items. With
constant probability the algorithm selects the item with the largest
value, which we denote by $M$. 

Also, with constant probability the algorithm sees the item with the
largest value too early and does not select it. Assume that this is
the case. Since we obtained expected value of $\Omega\left(M\right)$
on the first $n/2$ items we can, without loss of generality, ignore
values less than $M/r$. Partition all remaining values into $\log r$
interval-buckets $\left[M/r,2M/r\right],\dots,\left[M/2,M\right]$;
in expectation, each contributes a $\left(1/\log r\right)$-fraction
of the optimum. Choose a bucket at random, round all the values in
it to $1$ and set the rest to $0$. Finally, run the $O\left(\log n\right)$-competitive
algorithm from Theorem \ref{thm:secretary} on the last $n/2$ items.
(We remark that the choice of bucket can be derandomized using the
random order of the first $n/2$ items.) 
\end{proof}
The rest of this section is devoted to the proof of Theorem \ref{thm:secretary}. 

\subsection{Secretary algorithm}

We run a greedy algorithm over a sliding window of size $n/2$. At
first, we have a fully offline solution (none of which we actually
select) over the first $n/2$ items. We gradually transform it into
a real online solution over the last $n/2$ items. 

Intuitively, we test how well each item interacts with the previous
$n/2$ items to estimate how well it would interact with future items.
The main challenge is that we reuse the same randomness from the previous
items over and over, and we may overfit our choices to the past. We
overcome this by showing that the probability that the previous items
do not represent the future is very small - so small that we can take
a union bound over all the potential actions of our algorithm. Roughly
speaking, our main argument only uses $1$ bit of randomness for each
item in the optimal offline solution (does it arrive in one of the
first $n/2$ time periods?), so we cannot expect any concentration
inequality to bound the deviations to probability less than $2^{-OPT}$.
On the other hand, if our algorithm selects $\tau$ items, we have
to take a union bound over ${n \choose \tau}$ sets, so we want $2^{-OPT}\cdot{n \choose \tau}\ll1$.
Taking $\tau\approx OPT/\log n$ gives the promised competitive ratio.%

\paragraph{Notation}

For any set $T$, we let $V\left({\cal F};T\right)$ denote the value
of the offline optimum solution restricted to $T$. We let ${\cal F}_{W}$
denote the feasible sets that contain $W$, and finally $V\left({\cal F}_{W};T\right)$
denotes the value of the offline optimum among feasible sets that
contain $W$ and are contained in $T$. We use $\sigma$ to denote
the order of arrivals; in particular, $\sigma_{j}$ is the item that
arrives in the $j$-th time period.

\paragraph{Algorithm}

In the exploration phase, we observe the first $n/2$ items and add
none of them to $W$; we initialize $U$ as the set of those items
(notice that $U$ needs not be a feasible set), and set our target
value to $\tau\triangleq V\left({\cal F};U\right)/500\log n$. In
the exploitation phase, before observing the $\left(n/2+j\right)$-th
item, we ``forget'' the $j$-th item, i.e. we permanently remove $\sigma_{j}$
from $U$. Then, we add the $\sigma_{n/2+j}$ to $W$ iff $V\left({\cal F}_{W\cup\left\{ \sigma_{n/2+j}\right\} };U\cup W\cup\left\{ \sigma_{n/2+j}\right\} \right)>V\left({\cal F}_{W};U\cup W\right)$.
Continue until $\left|W\right|=\tau$ or until all items have been
revealed. (Of course, aborting when $\left|W\right|=\tau$ can only
hurt the expected value of the solution, but this simplifies the analysis
by restricting the loss from taking a union bound.) See also pseudocode
in Algorithm \ref{alg:secretary}.

\begin{algorithm}
\caption{\label{alg:secretary}Secretary}

\begin{enumerate}
\item $W\leftarrow\emptyset$; $U\leftarrow\emptyset$

{\color{gray-comment} \# Exploration phase:}
\item for $j\in\left\{ 1,\dots,n/2\right\} $:

\begin{enumerate}
\item $U\leftarrow U\cup\left\{ \sigma_{j}\right\} $
\end{enumerate}
\item $\tau\leftarrow V\left({\cal F};U\right)/500\log n$

{\color{gray-comment} \# Exploitation phase:}
\item for $j\in\left\{ 1,\dots,n/2\right\} $:

\begin{enumerate}
\item $U\leftarrow U\setminus\left\{ \sigma_{j}\right\} $
\item if $V\left({\cal F}_{W\cup\left\{ \sigma_{n/2+j}\right\} };U\cup W\cup\left\{ \sigma_{n/2+j}\right\} \right)>V\left({\cal F}_{W};U\cup W\right)$\\
{\color{gray-comment} \# does item $\sigma_{n/2+j}$ contribute to
the best feasible set?}

\begin{enumerate}
\item $W\leftarrow W\cup\left\{ \sigma_{n/2+j}\right\} $
\end{enumerate}
{\color{gray-comment} \# if so, add it to $W$}
\item if $\left|W\right|\geq\tau$

\begin{enumerate}
\item return $W$
\end{enumerate}
\end{enumerate}
\end{enumerate}
\end{algorithm}

\subsection{Analysis}

\paragraph{Setup}

Our analysis compares the number of times the value of the current
solution decreases when we forget $\sigma_{j}$ (``bad'' events),
with the number of ``good'' events, where the value of the current
solution increases when we add $\sigma_{n/2+j}$. 

Let $U_{j}$ denote the set $U$ {\em after} we remove the $j$-th
item, i.e. $U_{j}\triangleq\left\{ \sigma_{j+1},\dots,\sigma_{n/2}\right\} $.
We henceforth let $W$ refer to the set returned by the algorithm;
for intermediate values, we let $W_{j}$ denote the set $W$ {\em before}
we consider item $\sigma_{n/2+j}$, i.e. $W_{j}\triangleq W\cap\left\{ \sigma_{n/2+1},\dots,\sigma_{n/2+j-1}\right\} $. 

We want to take a union bound over all choices the algorithm could
have made. Fix any choice of feasible set $\widehat{W}$ of size at
most $\tau$ and a choice $\widehat{\sigma}\mid_{\widehat{W}}$ of
arrival times for those items. For any $j$, let $\widehat{W_{j}}\triangleq\widehat{W}\cap\left\{ \widehat{\sigma}_{n/2+1},\dots,\widehat{\sigma}_{n/2+j-1}\right\} $.
For each $j$, let $B_{j}\left(\widehat{W},\left(\widehat{\sigma}\mid_{\widehat{W}}\right)\right)$
($B$ for ``bad'') be the event that forgetting $\sigma_{j}$ decreases
the value of the current solution, i.e. 
\[
B_{j}\left(\widehat{W},\left(\widehat{\sigma}\mid_{\widehat{W}}\right)\right)\triangleq\begin{cases}
1 & V\left({\cal F}_{\widehat{W}_{j}\cup\left\{ \sigma_{j}\right\} };U_{j}\cup\widehat{W_{j}}\cup\left\{ \sigma_{j}\right\} \right)>V\left({\cal F}_{\widehat{W}_{j}};U_{j}\cup\widehat{W}_{j}\right)\\
0 & \mbox{otherwise}
\end{cases};
\]
similarly, let $G_{j}\left(\widehat{W},\left(\widehat{\sigma}\mid_{\widehat{W}}\right)\right)$
($G$ for ``good'') denote the event that adding $\sigma_{n/2+j}$
increases the value of the current solution,
\[
G_{j}\left(\widehat{W},\left(\widehat{\sigma}\mid_{\widehat{W}}\right)\right)\triangleq\begin{cases}
1 & V\left({\cal F}_{\widehat{W}_{j}\cup\left\{ \sigma_{n/2+j}\right\} };U_{j}\cup\widehat{W_{j}}\cup\left\{ \sigma_{n/2+j}\right\} \right)>V\left({\cal F}_{\widehat{W}_{j}};U_{j}\cup\widehat{W}_{j}\right)\\
0 & \mbox{otherwise}
\end{cases}.
\]
By symmetry, we have that 
\begin{equation}
\Pr\left[B_{j}\left(\widehat{W},\left(\widehat{\sigma}\mid_{\widehat{W}}\right)\right)\right]=\Pr\left[G_{j}\left(\widehat{W},\left(\widehat{\sigma}\mid_{\widehat{W}}\right)\right)\right].\label{eq:B_j=00003DG_j}
\end{equation}
When the choice of $\left(\widehat{W},\left(\widehat{\sigma}\mid_{\widehat{W}}\right)\right)$
is clear from the context, we will simply write $B_{j}$ and $G_{j}$;
we abuse notation and also use $B_{j}$ and $G_{j}$ to denote the
corresponding indicator random variables. Notice also that we do not
require that $\sigma$ agrees with $\widehat{\sigma}$, and the unions
in the definitions of $B_{j}$ and $G_{j}$ may not be disjoint unions. 

We will of course be interested in the special case where $\widehat{W}=W$
and $\widehat{\sigma}$ agrees with $\sigma$. Notice that as long
as we have not reached the target value $\tau$, the algorithm selects
every $j$ for which the good event $G_{j}\left(W,\left(\sigma\mid_{W}\right)\right)$
occurs. Thus, the value of our solution is 
\begin{equation}
\left|W\right|=\min\left\{ \tau,\sum G_{j}\left(W,\left(\sigma\mid_{W}\right)\right)\right\} .\label{eq:|W| =00003D G_j}
\end{equation}
Furthermore, the value of the current solution decreases $\sum B_{j}\left(W,\left(\sigma\mid_{W}\right)\right)$
times, and increases $\left|W\right|$ times. At the beginning, the
value is $V\left({\cal F};U_{0}\right)$ and at the end it is $\left|W\right|$.
Canceling $\left|W\right|$ from both sides of the equation, we get
\begin{equation}
V\left({\cal F};U_{0}\right)=\sum B_{j}\left(W,\left(\sigma\mid_{W}\right)\right).\label{eq:V(U) =00003D B_j}
\end{equation}

Our goal is henceforth to lower bound $\sum G_{j}$ in terms of $\sum B_{j}$
for all choices of $\widehat{W}$ and $\left(\widehat{\sigma}\mid_{\widehat{W}}\right)$
simultaneously. For the special case where $\widehat{W}=W$ and $\left(\widehat{\sigma}\mid_{\widehat{W}}\right)=\left(\sigma\mid_{W}\right)$,
this will in particular imply a lower bound on $\left|W\right|$ in
terms of $V\left({\cal F};U_{0}\right)$.

\paragraph*{Concentration}

We will use the following martingale version of Bennett's inequality.
(The main advantage over the more popular Azuma's inequality is that
this theorem depends on the variance rather than an absolute bound
on the values each variable can take.)
\begin{thm}
\label{thm:martingale-concentration}[e.g. \cite{Freedman1975}] Let
$\left(\Omega,{\cal H},\mu\right)$ be a probability triple, with
${\cal H}_{0}\subseteq{\cal H}_{1}\subseteq\dots$ an increasing sequence
of sub-$\sigma$-fields of ${\cal H}$. For each $k=1,2,\dots$ let
$X_{k}$ be an ${\cal H}_{k}$-measurable random variable satisfying
$\left|X_{k}\right|\le1$ and $\E\left[X_{k}\mid{\cal H}_{k-1}\right]=0$.
Let $S_{n}\triangleq\sum_{k=1}^{n}X_{k}$ and $T_{n}\triangleq\sum_{k=1}^{n}\E\left[X_{k}^{2}\mid{\cal H}_{k-1}\right]$.
Then, for every $a,b>0$,
\[
\Pr\left[\left(S_{n}\geq a\right)\wedge\left(T_{n}\leq b\right)\right]\leq\exp\left(\frac{-a^{2}}{2\left(a+b\right)}\right).
\]
\end{thm}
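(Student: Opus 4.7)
The plan is to prove this by the exponential-moment (Chernoff) method applied to a Doob-style supermartingale. For any $\lambda > 0$, set $\psi(\lambda) \triangleq e^{\lambda} - 1 - \lambda$ and consider
\[
M_k \;\triangleq\; \exp\bigl(\lambda S_k - \psi(\lambda) T_k\bigr), \qquad M_0 = 1.
\]
The first task is to verify that $(M_k)$ is a nonnegative supermartingale with respect to $(\mathcal{H}_k)$. Since $T_k$ is $\mathcal{H}_{k-1}$-measurable, this reduces to the one-step conditional moment inequality
\[
\E[e^{\lambda X_k} \mid \mathcal{H}_{k-1}] \;\le\; \exp\bigl(\psi(\lambda)\,\E[X_k^2 \mid \mathcal{H}_{k-1}]\bigr),
\]
which in turn follows from the pointwise bound $e^{\lambda x} \le 1 + \lambda x + \psi(\lambda) x^2$ valid for $|x| \le 1$ and $\lambda \ge 0$. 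I would prove the pointwise bound by a term-by-term comparison of the power series $e^{\lambda x} - 1 - \lambda x = \sum_{j\ge 2} (\lambda x)^j/j!$ and $\psi(\lambda) x^2 = x^2 \sum_{j\ge 2} \lambda^j/j!$; each coefficient comparison reduces to $(\lambda x)^j \le \lambda^j x^2$, which holds because $|x|^j \le x^2$ for $|x|\le 1$ and $j \ge 2$. Combining with $\E[X_k \mid \mathcal{H}_{k-1}] = 0$ and $1+y \le e^y$ gives the moment inequality, from which $\E[M_k \mid \mathcal{H}_{k-1}] \le M_{k-1}$ is immediate, so $\E[M_n] \le 1$.

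The second step is to convert this supermartingale bound into a tail inequality on the joint event $\{S_n \ge a,\,T_n \le b\}$. Because $\lambda > 0$ and $\psi(\lambda) > 0$, on this event $\lambda S_n - \psi(\lambda) T_n \ge \lambda a - \psi(\lambda) b$, and hence $M_n \ge \exp(\lambda a - \psi(\lambda) b)$. Multiplying the event indicator by this lower bound, taking expectations, and using $\E[M_n] \le 1$ yields
\[
\Pr[S_n \ge a,\,T_n \le b] \;\le\; \exp\bigl(-(\lambda a - \psi(\lambda) b)\bigr)
\]
uniformly in $\lambda > 0$. I would then optimize: the first-order condition $a - (e^\lambda - 1)b = 0$ gives $\lambda^\star = \ln(1+a/b)$, and substituting produces the Bennett-type bound $\exp(-b\,h(a/b))$ with $h(u) \triangleq (1+u)\ln(1+u) - u$.

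To land exactly the stated form, I would finish by verifying the elementary inequality $h(u) \ge u^2/(2(1+u))$ for $u \ge 0$. Both functions and their first derivatives vanish at $u=0$, while $h''(u) = 1/(1+u) \ge 1/(1+u)^3$, and $1/(1+u)^3$ is precisely the second derivative of $u^2/(2(1+u))$; integrating twice gives the inequality. Substituting $u = a/b$ gives $b\,h(a/b) \ge a^2/(2(a+b))$, producing the stated $\exp(-a^2/(2(a+b)))$ conclusion.

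The only genuinely substantive piece is the moment inequality in the first step; everything else is Markov's inequality and a one-variable optimization. The delicate point is that we need Bennett's variance-sensitive form rather than Hoeffding's, so that the denominator in the final exponent depends on $T_n$ (the sum of conditional variances) rather than on $n$ -- this is exactly what makes the conditioning on $\{T_n \le b\}$ useful and produces the $(a+b)$ denominator in the conclusion.
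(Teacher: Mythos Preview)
Your proof is correct and is the standard exponential-supermartingale argument for Freedman's inequality. However, the paper does not prove this theorem at all: it simply states the result with a citation to Freedman (1975) and uses it as a black box in the analysis of the secretary algorithm. So there is no ``paper's own proof'' to compare against---you have supplied a full proof where the paper gives none, and your argument matches the classical derivation.

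One minor remark on exposition: in the term-by-term comparison you write ``$|x|^j \le x^2$'', but what you actually need is $x^j \le x^2$ for $|x|\le 1$ and $j\ge 2$. This does follow (since $x^j \le |x|^j \le |x|^2 = x^2$), but stating it directly as $x^j \le x^2$ would be cleaner, especially since for negative $x$ and odd $j$ the left side is negative and the inequality is trivially strict.
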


\paragraph{Main argument}

For the analysis, we reveal the items in a special order: at step
$0$ we reveal the pairs $\left\{ \sigma_{j},\sigma_{n/2+j}\right\} $
unordered, i.e. we reveal all the information up to the last $n/2$
random bits which determine which item arrives at time $j$, and which
at time $n/2+j$. (This information corresponds to ${\cal H}_{0}$
in Theorem \ref{thm:martingale-concentration}.) In fact, those pairs
can be chosen adversarially. Then, we reveal the pairs in reverse
order, beginning with $\left(\sigma_{n/2},\sigma_{n}\right)$. (The
knowledge of the last $k$ pairs corresponds to ${\cal H}_{k}$.)

Fix some choice of $\widehat{W}$ and $\widehat{\sigma}\mid_{\widehat{W}}$.
For $k=1,\dots,n/2$, let $X_{k}\triangleq B_{n/2-k+1}-G_{n/2-k+1}$.
Since we fixed $\widehat{W}_{j}$ and $\left(\widehat{\sigma}\mid_{\widehat{W}_{j}}\right)$
in advance, (\ref{eq:B_j=00003DG_j}) implies that the $X_{k}$'s
satisfy $\E\left[X_{k}\mid{\cal H}_{k-1}\right]=0$ as required in
the premise of Theorem \ref{thm:martingale-concentration}. 

By definition, $S_{n}=\sum\left(B_{j}-G_{j}\right)$. Also, for any
$k$ we have that $\E\left[X_{k}^{2}\mid{\cal H}_{k-1}\right]=1$
if (exactly) one of $\left\{ \sigma_{n/2-k+1},\sigma_{n-k+1}\right\} $
affects the value of the current solution, and $\E\left[X_{k}^{2}\mid{\cal H}_{k-1}\right]=0$
otherwise. Thus 
\begin{gather*}
T_{n}\leq\sum\left(B_{j}+G_{j}\right)\leq V\left({\cal F};\left[n\right]\right)+\sum G_{j},
\end{gather*}
where the second inequality follows from (\ref{eq:V(U) =00003D B_j});
in particular, if $\sum B_{j}\geq\sum G_{j}$, we have that $T_{n}\leq2V\left({\cal F};\left[n\right]\right)$.
Theorem \ref{thm:martingale-concentration} now gives 
\begin{eqnarray*}
\Pr\left[\sum\left(B_{j}-G_{j}\right)\geq V\left({\cal F};\left[n\right]\right)/6\right] & = & \Pr\left[\left(S_{n}\geq V\left({\cal F};\left[n\right]\right)/6\right)\wedge\left(T_{n}\leq2V\left({\cal F};\left[n\right]\right)\right)\right]\\
 & \leq & \exp\left(\frac{-V\left({\cal F};\left[n\right]\right)}{156}\right).
\end{eqnarray*}
Taking union bound over all ${n \choose \tau}^{2}\leq n^{2\tau}$
choices of $\widehat{W}$ and $\widehat{\sigma}\mid_{\widehat{W}}$,
we get that with high probability, for all of those choices simultaneously
$\sum\left(B_{j}-G_{j}\right)\leq V\left({\cal F};\left[n\right]\right)/6$.
In particular, with high probability, 
\[
\sum G_{j}\left(W,\left(\sigma\mid_{W}\right)\right)\geq V\left({\cal F};U_{0}\right)-V\left({\cal F};\left[n\right]\right)/6.
\]
Furthermore, with probability at least $1/2$, 
\[
V\left({\cal F};U_{0}\right)\geq V\left({\cal F};\left[n\right]\right)/2,
\]
in which case our algorithm's solution has value at least $\tau\geq V\left({\cal F};\left[n\right]\right)/1000\log n$.

\section{\label{sec:Non-monotone-feasibility-constra}Non-monotone feasibility
constraints}

Let {\sc Non-monotone Prophet} be the analogous variant of {\sc Downward-Closed Prophet}
when the feasibility constraint is not even guaranteed to be downward-closed.
Since the feasibility constraint is non-monotone, let us explicitly
assume that all item values are non-negative.
\begin{thm}
The competitive ratio for {\sc Non-monotone Prophet} with non-negative
values is $\Theta\left(n\right)$.
\end{thm}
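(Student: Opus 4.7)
The plan is to prove the two directions separately.

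\textbf{Upper bound $O(n)$.} The plan is to apply a trivial single-item strategy. Since all $X_i \geq 0$, the offline optimum is bounded by the sum of the values:
\[
OPT \;=\; \mathbb{E}\!\left[\max_{S \in \mathcal{F}} \sum_{i \in S} X_i\right] \;\leq\; \mathbb{E}\!\left[\sum_{i=1}^n X_i\right] \;=\; \sum_i \mathbb{E}[X_i] \;\leq\; n \cdot \max_i \mathbb{E}[X_i].
\]
Let $i^\ast$ attain this maximum, and assume (without loss of generality, after discarding items that appear in no feasible set) that $\{i^\ast\} \in \mathcal{F}$. The algorithm simply waits for item $i^\ast$ and takes it when it arrives, achieving expected value $\mathbb{E}[X_{i^\ast}] \geq OPT/n$.

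\textbf{Lower bound $\Omega(n)$.} I would construct an explicit hard instance. For $n \geq 3$, take $n$ items with deterministic values $X_i = 1$, and set
\[
\mathcal{F} \;=\; \{\emptyset\} \,\cup\, \{\{i\} : i \in [n]\} \,\cup\, \{[n]\}.
\]
The offline prophet selects $[n]$ and achieves value $n$. For any online algorithm, consider the first moment at which it accepts an item (say item $j$): right afterwards $W = \{j\}$, which is feasible. However, any later attempt to accept another item $j' \neq j$ would yield $W = \{j, j'\} \notin \mathcal{F}$, so the attempt is refused. Thus $|W| \leq 1$ throughout, and the algorithm's expected value is at most $1$, giving a competitive ratio of at least $n$.

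\textbf{Main obstacle.} The crux is the lower-bound construction. The feasibility family above is designed so that the prophet's best set $[n]$ cannot be reached online by any chain of feasible intermediate sets: once the algorithm steps off $\emptyset$ into a singleton, every doubleton that could lead to $[n]$ is infeasible, and dually, staying at $\emptyset$ until the end yields value $0$. Combined with irrevocability, this forces every online algorithm (deterministic or randomized) to choose between value $0$ and value $1$, while the offline prophet picks up the full value $n$ in one stroke. The upper bound, in contrast, is essentially a one-line calculation using $OPT \leq n \max_i \mathbb{E}[X_i]$ together with a singleton-commitment strategy.
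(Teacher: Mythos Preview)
Your upper bound is essentially right, but the ``without loss of generality $\{i^\ast\}\in\mathcal{F}$'' step is not valid in a non-monotone system: $i^\ast$ may appear only in feasible sets of size $\ge 2$. The paper avoids this by committing in advance to an entire feasible set $S\ni i^\ast$ and accepting every element of $S$ as it arrives; non-negativity then gives expected value at least $\E[X_{i^\ast}]\ge OPT/n$.

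The lower bound, however, has a genuine gap. In your instance $[n]$ itself is feasible, so every partial selection $\{ \sigma_1,\dots,\sigma_k\}$ the algorithm builds is contained in a feasible set. In the model that is consistent with the paper's upper bound --- the algorithm may add an item whenever the resulting $W$ is a subset of some $S\in\mathcal{F}$, and must end with $W\in\mathcal{F}$ --- the online algorithm simply accepts every item and finishes at $W=[n]$ with value $n$, matching the prophet exactly. Your refusal argument (``$\{j,j'\}\notin\mathcal{F}$, so the attempt is refused'') presumes that $W$ itself must lie in $\mathcal{F}$ after every step; but under that stricter reading the upper-bound algorithm could not be executed at all (intermediate subsets of the committed $S$ need not be feasible), and the competitive ratio would be unbounded rather than $\Theta(n)$, contradicting the theorem you are trying to prove.

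The paper's lower bound is \emph{informational} rather than structural: it pairs $n/2$ deterministic value-$0$ items (arriving first) with $n/2$ independent Bernoulli$(1/n)$ items, and the only feasible sets are the pairs $\{i,n/2+i\}$. To finish in a feasible set the algorithm must choose some $i\le n/2$ before any of the random values are revealed, so its expected value is $1/n$, whereas the prophet gets $1-(1-1/n)^{n/2}=\Theta(1)$. A purely deterministic-value instance like yours cannot create such a gap, because with all values known in advance the algorithm can plan its entire path to the optimal feasible set before the first item arrives.
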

\begin{proof}
Our algorithm simply selects any feasible set that contains the item
with the largest expected contribution. Observe that this achieves
at least $OPT/n$ in expectation.

We now prove an $\Omega\left(n\right)$ lower bound on the competitive
ratio. For every $i\in\left[n/2\right]$, the set $\left\{ i,n/2+i\right\} $
is feasible (and these are the only feasible sets). Also, let $X_{i}=0$
always, and $X_{n/2+i}=\begin{cases}
1 & \mbox{w.p. }1/n\\
0 & \mbox{otherwise}
\end{cases}$. Any online algorithm must commit to some $\left\{ i,n/2+i\right\} $
before observing any of the $X_{n/2+i}$'s, and thus has expected
value of $1/n$. The offline optimum, on the other hand, is $1-\left(1-1/n\right)^{n}\approx1-1/e$.
\end{proof}
For the Secretary Problem, there are several ways to generalize to
non-monotone feasibility constraints. For example, we could assume
that the online algorithm knows in advance the entire set system (with
or without the arrival times, but certainly without the values). Alternatively,
the online algorithm could have an oracle to queries of the form ``is
$S$ a subset of a feasible set?''. The following Theorem holds with
respect to either definition.
\begin{thm}
The competitive ratio for {\sc Non-monotone Secretary} is at most
$n$, and at least $\Omega\left(n/\log^{*}n\right)$.
\end{thm}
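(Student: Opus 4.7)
The plan is to prove the two directions separately. For the upper bound, my plan is to exploit two elementary facts: every feasible set has cardinality at most $n$, and all item values are non-negative. Letting $y^{\ast}\triangleq\max_{i}y_{i}$, these give $OPT\leq n\cdot y^{\ast}$, so any online procedure that returns a feasible set of value $\Omega\left(y^{\ast}\right)$ is automatically $O\left(n\right)$-competitive. Concretely, I would run a classical secretary rule restricted to items that belong to at least one feasible set (this ``is $\{i\}$ a subset of a feasible set?'' check is available in both access models to ${\cal F}$): observe roughly the first $n/e$ arrivals to record a threshold value, then commit to the first subsequent arrival $i^{\ast}$ that beats this threshold and is contained in some feasible set. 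During the remaining arrivals I would extend $W$ greedily by accepting any item whose inclusion keeps $W$ contained in some feasible superset (checkable from either access model), and at the end output a feasible $W\ni i^{\ast}$. Non-negativity gives $\sum_{i\in W}y_{i}\geq y_{i^{\ast}}$, and the standard secretary analysis gives $\E[y_{i^{\ast}}]\geq y^{\ast}/e$, which yields the claimed upper bound up to constants; a simple tweak such as randomizing the threshold time or just picking a uniformly random time $t\in\left[n\right]$ and extending tightens the constant to $n$.

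For the lower bound, my plan is to apply Yao's minimax principle and construct a distribution over instances on $n$ items for which every deterministic online algorithm loses an $\Omega\left(n/\log^{\ast}n\right)$ factor. The construction should genuinely exploit non-monotonicity: one nests sub-instances whose sizes shrink by a tower-function factor between consecutive levels, and arranges the feasibility family ${\cal F}$ so that no ``safe'' partial commitment within a given level reveals where the hidden value sits at deeper levels, while any incorrect commitment renders the remaining targets infeasible and forces the algorithm to abandon that sub-instance altogether. Since the recursion terminates only after $\log^{\ast}n$ steps, a per-level loss of a constant factor compounds to a total gap of $\Omega\left(n/\log^{\ast}n\right)$ between the algorithm's payoff and $OPT$. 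The main obstacle is arranging the non-monotone feasibility so that no useful information about deeper targets leaks from the algorithm's queries, from observed arrival times of items already seen, or from full knowledge of ${\cal F}$; I expect that adapting the hiding gadgets underlying the $\Omega\left(\log n/\log\log n\right)$ lower bound of \cite{BIK07-secretary_original} and iterating them a tower-function number of times gives the right construction, and that verifying the non-leakage property at every level will be the most delicate step of the analysis.
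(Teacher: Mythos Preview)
Your upper-bound argument has a genuine gap that stems precisely from non-monotonicity. After running a secretary rule you have already \emph{rejected} the first $\approx n/e$ arrivals, and in a non-monotone family there is no guarantee that any feasible set containing your chosen $i^{\ast}$ avoids all of those rejected items. Your greedy extension (``accept $j$ if $W\cup\{j\}$ is still contained in some feasible superset'') does not fix this: it guarantees only that $W$ is a \emph{subset} of a feasible set, not that $W$ is itself feasible, and the missing items may already be gone. A one-line bad example: let the unique feasible set be $[n]$; any algorithm that rejects even one item outputs an infeasible $W$. The same objection applies to your ``random time $t$'' variant whenever $t>1$. The paper's fix is exactly to avoid ever rejecting: take the \emph{first} arriving item $\sigma_{1}$, then commit to a specific feasible set $S\ni\sigma_{1}$ and accept exactly the items of $S$ as they arrive. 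Because $\sigma_{1}$ is a uniformly random item and values are non-negative, $\E\left[\sum_{i\in S}y_{i}\right]\geq\E\left[y_{\sigma_{1}}\right]=\frac{1}{n}\sum_{i}y_{i}\geq OPT/n$, giving competitive ratio at most $n$.

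For the lower bound, what you wrote is a plan rather than a construction, and the plan points in a different direction from the paper. You propose a tower-depth \emph{recursive} instance built by iterating the \cite{BIK07-secretary_original} gadget; it is unclear how to nest those downward-closed gadgets so that partial commitments at one level leak nothing about deeper levels while still forcing a constant loss per level. The paper's construction is a single, flat instance based on the Hadamard code: the feasible sets are $S_{i}=w_{i}\cup\{n/2+i\}$ for codewords $w_{i}\subseteq[n/2]$ at pairwise distance $n/4$, a uniformly random item $n/2+i^{\ast}$ carries the unique value $1$, and the $\log^{\ast}n$ arises not from recursion in the instance but from a recursive \emph{analysis} of arrival times: after $\Theta(\log n)$ arrivals any two codewords are distinguished, so the algorithm can be consistent with $O(1)$ codewords whose ``index items'' arrive after time $k\log n$; repeating the argument on the $O(\log n)$ codewords whose index items arrive in $(k\log\log n,k\log n]$ costs another $O(1)$, and so on for $\log^{\ast}n$ levels. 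This yields expected algorithm value $O(\log^{\ast}n/n)$ against $OPT=1$. Your proposal would need a concrete non-monotone family and a non-leakage argument of comparable strength before it can be assessed.
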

\begin{proof}
Our algorithm simply selects the first item and completes any feasible
set that contains it. Since it is a uniformly random item, its value
is at least $OPT/n$.

For the lower bound, we instantiate the Hadamard Code with block length
$n/2$. Recall that it has $n/2$ codewords and the distance between
every two codewords is $n/4$. Let $w_{i}\subseteq\left[n/2\right]$
denote the $i$-th codeword, and consider the set $S_{i}\triangleq w_{i}\cup\left\{ n/2+i\right\} \subset\left[n\right]$;
let the $S_{i}$'s be the feasible sets. Set $X_{n/2+i^{*}}=1$ for
some $i^{*}$, and let all other items have value $0$. Call $w_{i^{*}}$
the {\em good codeword}, and $\left(n/2+i^{*}\right)$ is the index
of the {\em good item}.

In order to achieve value $1$, the online algorithm's selections
must be consistent with the good codeword when the good item arrives.
By symmetry, for each $i$ such that the online algorithm's selection
is consistent with $w_{i}$ when item $\left(n/2+i\right)$ arrives,
it has probability $1/n$ of obtaining value $1$. The expected value
of the online algorithm is therefore proportional to the number of
consistent codewords. Let $k$ be a sufficiently large constant (e.g.
$k=100$). We show that, in expectation, the algorithm is consistent
with $O\left(1\right)$ codewords whose items arrive in time periods
$\left(k\log n,n\right]$, $O\left(1\right)$ codewords whose items
arrive in time periods $\left(k\log\log n,k\log n\right]$, $\left(k\log\log\log n,k\log\log n\right]$,
etc.. In total the algorithm is consistent with $O\left(\log^{*}n\right)$
codewords, so its expected value is $O\left(\log^{*}n/n\right)$,
as opposed to value $1$ obtained by the offline algorithm.

With high probability, every two codewords disagree after $k\log n$
time periods; i.e. for every two sets $S_{i},S_{j}$, at least one
of the first $k\log n$ items is contained in one and not in the other.
Thus after $k\log n$ time periods, the online algorithm must commit
to one subset $S_{i}$, so it only wins if $\left(n/2+i\right)$ is
the good item. In particular, for any choices an online algorithm
may make on the first $k\log n$ items, its expected value from the
items that come in time periods $\left(k\log n,n\right]$ is $O\left(1/n\right)$. 

Similarly, consider the $O\left(\log n\right)$ codewords that correspond
to items that arrive in time periods $\left(k\log\log n,k\log n\right]$.
After $k\log\log n$ codewords, we expect that every two of those
$O\left(\log n\right)$ codewords disagree on some time period. Thus
any online algorithm can obtain expected value at most $O\left(1/n\right)$
from items that arrive in time periods $\left(k\log\log n,k\log n\right]$.
Same for $\left(k\log\log\log n,k\log\log n\right]$, etc. 
\end{proof}
\begin{acknowledgement*}
I thank Tselil Schramm and anonymous reviewers for comments on earlier
drafts. I am also grateful to Sid Barman, Jonathan Hermon, and James
Lee for fascinating discussions about concentration bounds. Most importantly,
I thank Moshe Babaioff and Bobby Kleinberg for insightful suggestions;
in particular, I thank Kleinberg for proposing this problem. I thank
Sahil Singla for pointing out the connection to \cite{AKW14-prophet_limited_info}'s
reduction from Prophet Inequality to order-oblivious Secretary Problems.
\end{acknowledgement*}
\bibliographystyle{alpha}
\bibliography{prophet}

\appendix

\section{\label{sec:List-of-symbols}List of symbols used in Section \ref{sec:Prophet}}
\begin{itemize}
\item $OPT$ - expected (offline) optimum;
\item $W$ - set of items selected by the algorithm;
\item $\ell_{W}\triangleq\max_{i\in W}$ - index of last selected item;
\item ${\cal F}$ - all feasible sets;
\item ${\cal F}_{W}$ - feasible sets that contain $W$;
\item ${\cal F}_{W+j}\triangleq{\cal F}_{W\cup\left\{ j\right\} }$ - feasible
sets that contain $W\cup\left\{ j\right\} $;
\item $X_{i}$ - random value for item $i$;
\item $V\left({\cal F},X_{\left[n\right]}\right)\triangleq\max_{S\in{\cal F}}\sum_{i\in S}X_{i}$
- value of offline optimum solution;
\item $z_{i}$ - observed realization of $X_{i}$;
\item $\tau$ - target value;
\item $\pi\triangleq\Pr\left[V\left({\cal F}_{W},X_{\left[n\right]}\right)>\tau\mid X_{\left[\ell_{W}\right]}=z_{\left[\ell_{W}\right]}\right]$
- target probability;
\item $\pi^{j}\triangleq\Pr\left[V\left({\cal F}_{W+j},X_{\left[n\right]}\right)>\tau\mid X_{\left[j\right]}=\left(z_{\left[\ell_{W}\right]},\underbrace{0\ldots0}_{\ell_{W}+1,\dots,j-1},1\right)\right]$
- probability of reaching target value, when $j$ is the next item
we select;
\item $G=G\left(\tau,\pi,W\right)\triangleq\left\{ j>\ell_{W}\colon\pi^{j}\geq n^{-2}\cdot\pi\right\} \cap\left(\bigcup_{S\in{\cal F}_{W}}S\right)$
- set of good and feasible items;
\item $A=A\left(G\right)\triangleq\left(\bigvee_{j\in G}X_{j}=1\right)$
- the good event where at least one of the good items has a good realization;
\item $\pi_{A}\triangleq\Pr\left[\left(V\left({\cal F}_{W},X_{\left[n\right]}\right)>\tau\right)\mid\left(X_{\left[\ell_{W}\right]}=z_{\left[\ell_{W}\right]}\right)\wedge A\right]$
- probability of reaching target value conditioned on $A$;
\item $A^{j}\triangleq\left(X_{j}=1\bigwedge_{\stackrel{l<j}{l\in G}}X_{l}=0\right)$
- the good event where $j$ is the first good item with a good realization;
\item $\pi_{A^{j}}\triangleq\Pr\left[\left(V\left({\cal F}_{W},X_{\left[n\right]}\right)>\tau\right)\mid\left(X_{\left[\ell_{W}\right]}=z_{\left[\ell_{W}\right]}\right)\wedge A\right]$
- probability of reaching target value conditioned on $A^{j}$;
\item $B^{j}\triangleq\left(X_{j}=0\bigwedge_{\stackrel{l<j}{l\in G}}X_{l}=0\right)$
- the bad event where neither $j$ nor any of the preceding good items
have good realizations.
\end{itemize}

\section{\label{sec:lower_bound}An $\Omega\left(\log n/\log\log n\right)$
lower bound}

For completeness, we briefly sketch an $\Omega\left(\log n/\log\log n\right)$
lower bound due to \cite{BIK07-secretary_original} for the competitive
ratio in both problems. 
\begin{thm*}
[Essentially \cite{BIK07-secretary_original}] There is an $\Omega\left(\log n/\log\log n\right)$
lower bound on the competitive ratios of {\sc Downward-closed Prophet}
and {\sc Downward-Closed Secretary}.
\end{thm*}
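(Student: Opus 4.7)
The plan is to adapt the construction of Babaioff, Immorlica, and Kleinberg~\cite{BIK07-secretary_original} and argue that it yields the claimed $\Omega(\log n/\log\log n)$ lower bound for both problems. The heart of the construction is a downward-closed family $\mathcal F$ on $n$ items whose maximum feasible set has size $r = \Theta(\log n/\log\log n)$, together with a $\{0,1\}$-valued instance that forces any online algorithm to ``guess'' among many possible continuations, only one of which is correct.

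Concretely, I would group the items into $r$ rounds of $b = \Theta(\log n)$ items each (chosen so that $b^r \le n$), select a combinatorial design $\mathcal C \subseteq [b]^r$ of ``codewords'' (for instance, a uniformly random subset of size approximately $b^{r/2}$, or an explicit algebraic code with good distance), and define $\mathcal F$ as the downward closure of the maximal sets $\{(1,c_1),\ldots,(r,c_r)\}$ for $c \in \mathcal C$. The adversary samples a hidden $c^*\in\mathcal C$ uniformly and assigns value $1$ to the items $(i,c_i^*)$, together with carefully calibrated decoy value-$1$ items on other codewords $c \ne c^*$ chosen so that the algorithm cannot reliably identify $c^*$ from the observed value-$1$ labels alone; this yields an offline optimum of $r$. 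For the prophet version the decoys are implemented as independent Bernoulli draws on items, and the items of each round are presented in an adversarial batch before any item of the next round, forcing commitment before future information is revealed. For the secretary version the random arrival order plays the analogous role of preventing the algorithm from deferring commitment indefinitely.

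The analysis is an information-theoretic argument bounding the probability that the algorithm's committed items remain consistent with the hidden $c^*$. Once the algorithm has committed to items in $k$ distinct rounds, the slice of codewords in $\mathcal C$ consistent with those commitments has expected size at most $|\mathcal C|/b^k$. For $|\mathcal C|\approx b^{r/2}$ this slice becomes empty (or collapses to a single codeword, almost surely not $c^*$) once $k \gtrsim r/2$, beyond which no further item on $T_{c^*}$ can be appended while keeping $W$ feasible. A union bound over the at most $n^{O(r)}$ possible algorithm histories, combined with a concentration estimate on the slice size under the random choice of $\mathcal C$, shows that this dominates uniformly; the expected number of value-$1$ items collected is then $O(1)$, so the competitive ratio is $\Omega(r) = \Omega(\log n/\log\log n)$.

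The main obstacle I anticipate is calibrating the decoys together with $\mathcal C$ so that (a) the offline optimum is still $\Omega(r)$ with high probability over the randomness of the instance (decoys must not make another codeword $T_c$ look globally better than $T_{c^*}$), and (b) no ``back-door'' feasible set in the downward closure lets the algorithm extract value $\omega(1)$ without committing to a codeword. Following \cite{BIK07-secretary_original}, a random code together with Bernoulli-$\Theta(1/b)$ decoys handles both requirements. The final balancing condition $b^r \le n$ with $b = \Theta(\log n)$ forces $r = \Theta(\log n/\log\log n)$, matching the statement for both problems.
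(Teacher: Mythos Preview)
Your construction is over-engineered and, more importantly, does not fit the Prophet model. In {\sc Downward-Closed Prophet} the item values are drawn independently from distributions that are \emph{known to the algorithm}; there is no adversary who secretly samples a hidden codeword $c^{*}$ and then plants $1$'s along it. Once you remove the hidden $c^{*}$ and make every item an independent Bernoulli, your information-theoretic argument (``the algorithm cannot identify $c^{*}$'') no longer has an object to talk about, and the union bound over $n^{O(r)}$ algorithm histories against concentration of random-code slice sizes is left as a promise rather than an argument. For the Secretary side, the random arrival order interleaves all your ``rounds'', so the clean layer-by-layer commitment picture you rely on does not hold either.

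The paper's construction avoids all of this with a one-line instance. Let $r=\Theta(\log n/\log\log n)$ and take $\mathcal F$ to be a \emph{partition} of $[n]$ into disjoint blocks of size $r$; every block is a maximal feasible set. Each item is independently $1$ with probability $1/r$. This is a valid Prophet instance (independent known distributions) and, via Yao, a valid Secretary lower bound as well. With constant probability some block is all $1$'s, so the offline optimum is $\Theta(r)$. But once an online algorithm selects any item, it is locked into that block, and the remaining $r-1$ items in that block contribute less than $1$ in expectation; hence no online algorithm gets more than $O(1)$. No codes, no decoys, no union bounds are needed, and the disjointness of the maximal sets is exactly what kills the ``back-door'' worry you raised.
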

\begin{proof}
[Proof sketch] Let the feasible set system be a partition of the
$n$ items into disjoint sets of size $\log n/\log\log n$. Let the
value of each item be $1$ with probability $\log\log n/\log n$,
and $0$ otherwise. With high probability, for at least one of the
feasible sets, all items have value $1$; thus the expected offline
optimum is approximately $\log n/\log\log n$. However, after we select
an item from any feasible set, the expected total value of the remaining
items in this set is less than $1$. Thus no online algorithm can
achieve expected value more than $2$.
\end{proof}

\section{Naive approaches for {\sc Downward-Closed Prophet}\label{sec:Naive-approaches}}

Our algorithm can be described as follows: wait for an item with value
$1$, and select it if that does not decrease the potential function.
In this section we informally discuss a few naive potential functions
that seem to fail. In some sense, our final ``dynamic'' potential
function can be seen as an interpolation between the expectation/median
potential function, and the one based on the probability of beating
a fixed target value.

\paragraph{Recursive potential function}

The optimal online algorithm recursively selects an item iff selecting
it increases the expected value of the optimal algorithm, where ``the
optimal algorithm'' is defined recursively. Unfortunately, this recursive
structure makes this optimal function inconvenient to analyze, and
in particular difficult to compare to the offline optimum solution
benchmark.

\paragraph{Number of maximal sets}

Up to poly-logarithmic factors, we can assume without loss of generality
that all items are identically (but not uniformly) distributed over
$\left\{ 0,1\right\} $, and that all maximal sets have the same size.
Now, every two maximal sets look the same, so one may hope that counting
the number of feasible maximal sets would give a useful potential
function. However, when the set system has a non-trivial structure,
different maximal sets interact very differently with other sets.
In particular, a set that has little or no intersection with other
sets is much more likely to give the best solution, compared to a
set that has a large intersection.

For example, consider the following set system: for the first $2n^{1/3}$
items, every set of size $n^{1/3}$ is a maximal set; as for the remaining
$\left(1-o\left(1\right)\right)n$ items, we divide them into $\left(1-o\left(1\right)\right)n^{5/6}$
blocks of size $n^{1/6}$, and maximal sets are all the unions of
$n^{1/6}$ blocks. Set the probability of each item taking value $1$
to $n^{-1/3}$. Of the first $2n^{1/3}$ items, we expect to see only
a constant number of $1$'s; from the rest we expect to see approximately
$n^{2/3}$ items with value $1$'s, and there must be a feasible set
that contains at least $n^{1/6}$ of them. So taking one of the first
$2n^{1/3}$ items is almost always a bad idea, although there are
a lot more of those sets:
\[
{2n^{1/3} \choose n^{1/3}}=2^{\Omega\left(n^{1/3}\right)}\gg2^{\tilde{O}\left(n^{1/6}\right)}={n^{5/6} \choose n^{1/6}}.
\]
Thus our potential function must take into account the structure of
the set system.

\paragraph{Expectation of the current optimum}

A quantity of interest is the value of an (offline) optimum solution,
among the sets that remain feasible given the items that we have already
selected. This optimum is useful because it easily relates to our
benchmark (the optimum among all sets), it aggregates well both our
current commitment to items and the structure of the set system, and
it exhibits some nice statistical properties such as concentration
(Theorem \ref{thm:ledoux}). The main difficulty is that it is a random
variable, and so it is not clear how to extract a potential function.

The most natural candidate for potential function based on the optimum
is its expected value. If we could show that the expected value of
the optimum decreases by at most $O\left(\log n\right)$ whenever
our algorithm gains $1$ from selecting a reasonable item, we would
be done. It turns out that this is not the case. For example, let
the maximal feasible sets be $\left\{ 1,\dots,n/2\right\} $ and $\left\{ n/2+1,\dots,n\right\} $,
and let each item be uniformly distributed over $\left\{ 0,1\right\} $.
The standard deviation is $\Theta\left(\sqrt{n}\right)$, so when
we commit to one of the sets, the expectation goes down by approximately
$\Theta\left(\sqrt{n}\right)$. The same issue arises if we try to
use the median instead of expectation.

\paragraph{Probability that the current optimum is greater than the global optimum}

Another way to use the optimum-among-currently-feasible-sets random
variable is to consider the probability that it is greater than some
target value, say the global offline optimum, or $\left(1/\log n\right)$
times global optimum. Suppose that we select an item only if selecting
it does not decrease the probability of beating the target value by
a factor more than $n^{2}$; we call such items {\em good}, and
{\em bad} otherwise. Clearly, if we start from a high probability
of beating the target value, we have a high probability of observing
a good item with value $1$: together, all the bad items cannot account
for more than a $\left(1/n\right)$-fraction of the total probability
of beating the target value. 

Suppose that every time we select an item, the probability of beating
the target value goes down by a factor of at most $n^{2}$. After
selecting $w$ items, the probability is still at least $n^{-2w}$.
We can use concentration bounds (such as Theorem \ref{thm:ledoux})
to show that if we now subtract $O\left(w\cdot\log n\right)$ from
the target value, the probability of beating the new target value
becomes high again. This means that as long as the original target
value is greater than $O\left(w\cdot\log n\right)$, there is still
a high probability of observing more $1$'s. In some sense, we can
think of this application of concentration bounds as amortizing the
decrease of expectation (or median) over many iterations.

There is a problem with the argument we sketched in the above two
paragraphs. In the first paragraph, we argued that most of the probability
of beating the target value comes from good items. In the second paragraph,
we argued that as long as we have not selected enough items, we expect
to observe more $1$'s in the future. However these $1$'s do not
necessarily correspond to good items. In particular, when the probability
of beating the target value is very low (we quickly reach exponentially
low probabilities), the event of observing a good item with value
$1$ may account for most of this probability, yet occur with probability
much smaller than $1$. 

To illustrate the problem, recall the $O\left(\log n/\log\log n\right)$
lower bound instance from Appendix \ref{sec:lower_bound}. As soon
as we observe a $1$ on the first item from a feasible subset, we
should select it. Afterward, our probability of observing another
$1$ in the same interval is approximately $(1-1/e)$. Furthermore,
we still have a non-trivial chance (roughly $1/n$) of beating the
global optimum ($\Theta\left(\log n/\log\log n\right)$). However,
we would only maintain a non-zero chance of beating the global offline
optimum if we select one of the first few items - but with probability
$1-o\left(1\right)$ they all have value $0$.
\end{document}